\theoremstyle{plain}
\newtheorem{theorem}{Theorem}
\newtheorem{proposition}{Proposition}
\newtheorem{corollary}{Corollary}
\newtheorem{lemma}{Lemma}
\theoremstyle{remark}
\newtheorem{remark}{Remark}
\newtheorem{example}{Example}
\theoremstyle{definition}
\newtheorem{definition}{Definition}
\begin{document}
\title{A New Upper Bound for Distributed Hypothesis Testing Using the Auxiliary Receiver Approach} 

\author{
 \IEEEauthorblockN{Zhenduo Wen and Amin Gohari\\}
\IEEEauthorblockA{Department of Information Engineering\\
                   The Chinese University of Hong Kong\\
                   ShaTin, NT, Hong Kong SAR\\
                   Email: \{wz124,agohari\}@ie.cuhk.edu.hk}
                   }

\maketitle

\begin{abstract}
   This paper employs the add-and-subtract technique of the auxiliary receiver approach to establish a new upper bound for the distributed hypothesis testing problem of distinguishing between the two hypotheses $P_{XY}$ and $Q_{XY}$ using one-way communication. The new bound has fewer assumptions than the upper bound proposed by Rahman and Wagner, is at least as tight as the bound by Rahman and Wagner, and outperforms it in certain discrete and Gaussian settings. Conceptually speaking, unlike Rahman and Wagner, who view their additional receiver as \emph{side information}, we view it as an auxiliary receiver and use a different manipulation for single-letterization.
The new bound also implies that the minimum rate $R$ needed to achieve the centralized bound is at least the conditional entropy $H_P(X|Y)$ when $P_{XY}$ and $Q_{XY}$ are non-degenerate distributions.   

\textit{Index Terms: }Auxiliary receivers, distributed hypothesis testing, error exponent.
\end{abstract}

\section{Introduction}


This paper \footnote{This work "A New Upper Bound for Distributed Hypothesis Testing Using the Auxiliary Receiver Approach" was partially presented at the 2025 IEEE International Symposium on Information Theory.} concerns the problem of hypothesis testing in a distributed setting. We consider the one-sided two-terminal setup as described in Fig.~\ref{fig1}. Let \(X\) and \(Y\) be two sources with discrete alphabets \(\mathcal{X}\) and \(\mathcal{Y}\), respectively. Two terminals observe independent and identically distributed (i.i.d.) sequences \(X^n\) and \(Y^n\) over \(n\) time instances. Under the null hypothesis \(\mathcal{H}_0\), the joint distribution is \(P_{XY}\), while under the alternative hypothesis \(\mathcal{H}_1\), it is \(Q_{XY}\).
The observed sequence of \(X\) is transmitted over an \((n, R)\) channel using an encoding function \(f_n: \mathcal{X}^n \rightarrow \mathcal{M} = \{1, 2, \ldots, 2^{\lfloor nR \rfloor}\}\). The detector receives the message \(m = f_n(x^n)\) and decides between the hypotheses \(\mathcal{H}_0\) and \(\mathcal{H}_1\) based on the following discriminant rule:
\begin{align*}
g_n(m, y^n) = 
\begin{cases}
\mathcal{H}_0, & \text{if } (m, y^n) \in \mathcal{A} \\ 
\mathcal{H}_1, & \text{otherwise}
\end{cases}
\end{align*}
where \(\mathcal{A} \subseteq \mathcal{M} \times \mathcal{Y}^n\) is the acceptance region for \(\mathcal{H}_0\). See Fig.~\ref{fig1} for an illustration.

The performance of a Distributed Hypothesis Test (DHT) can be quantified by the error exponent with respect to the communication rate \(R\). Let \(p(\epsilon, n, R)\) denote the smallest achievable error probability under the alternative hypothesis \(\mathcal{H}_1\) (Type II error) when the source blocklength is \(n\), the message length does not exceed \(\lfloor nR \rfloor\) bits, and the error probability under the null hypothesis \(\mathcal{H}_0\) (Type I error) is upper bounded by \(\epsilon\). The error exponent of \(p(\epsilon, n, R)\) is defined as
\[
E_{P_{XY}, Q_{XY}}(R) \overset{\triangle}{=} \lim_{\epsilon \rightarrow 0} \lim_{n \rightarrow \infty} -\frac{1}{n} \log p(\epsilon, n, R).
\]
\subsection{Literature Review}
\begin{figure}[htbp]
\centering 
\resizebox{0.6\linewidth}{!}{
\begin{tikzpicture}

\node (X) {$X^n$};
\node[right=of X, draw, rectangle] (Enc) {$\begin{aligned}
    & ~~~\text{Encoder}\\
    & \ m=f_n(x^n)
\end{aligned}$};
\node[right= of Enc, draw, xshift=1cm, rectangle] (Dec) {$\begin{aligned}
    & ~\text{Detector}\\ &g_n(m, y^n)
\end{aligned}$};
\node[below= of Dec] (Y) {$Y^n$};
\node[right= of Dec] (H) {$\mathcal{H}_0 / \mathcal{H}_1$};
  
\draw[->] (X) -- (Enc); 
\draw[->] (Enc) -- (Dec) node[midway, above] {$M$};
\draw[->] (Y) -- (Dec);
\draw[->] (Dec) -- (H);

\draw[dotted, thick, color=black] (X) to[bend right] node[midway, above, yshift=5pt] {$P_{XY}/ Q_{XY}$} (Y);
\end{tikzpicture}  
}
\caption{The problem setup}
\label{fig1}
\end{figure}
The following multi-letter characterization of $E_{P_{XY},Q_{XY}}(R)$ is known:

\begin{theorem}[Theorems 4 in \cite{AhlswedeCsiszar1986}]\label{thm14Ahlswede} We have
$$
E_{P_{XY},Q_{XY}}(R) =\sup_{n, P_{M|X^n}}\frac{1}{n} D(P_{MY^n} \| Q_{MY^n}),
$$
where the message $M$ takes values in $\mathcal{M} = \{1,2, \cdots, 2^{\lfloor nR \rfloor}\}$.\label{thm1cz}
\end{theorem}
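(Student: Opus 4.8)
The plan is to prove the claimed identity by establishing the two inequalities separately. The ``$\le$'' direction (converse) follows quickly from the data-processing inequality for relative entropy together with a short binary-divergence estimate; the ``$\ge$'' direction (achievability) is the substantive part, obtained by a block-coding argument combined with Stein's lemma.

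For the converse, fix $n$, $\epsilon>0$, and any admissible scheme, i.e.\ a deterministic encoder $f_n$ and an acceptance region $\mathcal A\subseteq\mathcal M\times\mathcal Y^n$ with Type-I error $P_{MY^n}(\mathcal A^c)\le\epsilon$ and Type-II error $Q_{MY^n}(\mathcal A)=p$, where $M=f_n(X^n)$. A deterministic encoder is a special case of a stochastic one, so $P_{M|X^n}$ is among those appearing in the supremum. Applying the data-processing inequality to the binary decision map $(m,y^n)\mapsto\mathbf{1}\{(m,y^n)\in\mathcal A\}$ gives
\[
D(P_{MY^n}\|Q_{MY^n})\ \ge\ d\big(P_{MY^n}(\mathcal A)\,\|\,p\big),
\]
where $d(a\|b)$ is the binary relative entropy. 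Using $d(a\|b)\ge -\log 2 + a\log(1/b)$, the bound $P_{MY^n}(\mathcal A)\ge 1-\epsilon$, and rearranging, one gets
\[
-\frac1n\log p\ \le\ \frac{1}{1-\epsilon}\left(\frac1n D(P_{MY^n}\|Q_{MY^n})+\frac{\log 2}{n}\right).
\]
Since $\frac1n D(P_{MY^n}\|Q_{MY^n})$ is at most the supremum in the theorem, applying this to the optimal scheme (so that $p=p(\epsilon,n,R)$), letting $n\to\infty$ and then $\epsilon\to 0$, yields $E_{P_{XY},Q_{XY}}(R)\le\sup_{n,P_{M|X^n}}\frac1n D(P_{MY^n}\|Q_{MY^n})$.

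For achievability I would first reduce to deterministic encoders. The map $P_{M|X^n}\mapsto D(P_{MY^n}\|Q_{MY^n})$ is convex, since $P_{MY^n}$ and $Q_{MY^n}$ are linear in $P_{M|X^n}$ and relative entropy is jointly convex; hence for each fixed $n$ the supremum over the polytope of stochastic kernels is attained at a vertex, i.e.\ at a deterministic encoder. It therefore suffices to show $E_{P_{XY},Q_{XY}}(R)\ge\frac1{n_0}D_0$ for every $n_0$ and every deterministic $f_{n_0}\colon\mathcal X^{n_0}\to\{1,\dots,2^{\lfloor n_0R\rfloor}\}$, where $M=f_{n_0}(X^{n_0})$ and $D_0:=D(P_{MY^{n_0}}\|Q_{MY^{n_0}})$. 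Given a blocklength $n=kn_0+r$ with $0\le r<n_0$, split $X^n$ into $k$ consecutive length-$n_0$ blocks (discarding the last $r$ symbols), apply $f_{n_0}$ to each block to obtain messages $(M_1,\dots,M_k)$, and transmit this tuple; it takes at most $2^{k\lfloor n_0R\rfloor}\le 2^{\lfloor kn_0R\rfloor}\le 2^{\lfloor nR\rfloor}$ values, so the rate constraint holds. Under $\mathcal H_0$ the pairs $\{(M_i,Y^{n_0}_{(i)})\}_{i=1}^{k}$ are i.i.d.\ with law $P_{MY^{n_0}}$, and under $\mathcal H_1$ they are i.i.d.\ with law $Q_{MY^{n_0}}$, so the detector can run the optimal Neyman--Pearson test on these $k$ i.i.d.\ samples; by the achievability half of Stein's lemma, for every $\epsilon\in(0,1)$ and every $\delta>0$ there is, for all sufficiently large $k$, an acceptance region with Type-I error $\le\epsilon$ and Type-II error $\le 2^{-k(D_0-\delta)}$. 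Therefore $-\frac1n\log p(\epsilon,n,R)\ge\frac{kn_0}{n}\cdot\frac1{n_0}(D_0-\delta)$, and since $kn_0/n\to 1$ as $n\to\infty$ this gives $\liminf_{n\to\infty}\big(-\frac1n\log p(\epsilon,n,R)\big)\ge\frac1{n_0}(D_0-\delta)$; letting $\delta\to 0$ and noting that the bound does not depend on $\epsilon$ yields $E_{P_{XY},Q_{XY}}(R)\ge\frac1{n_0}D_0$. Taking the supremum over $n_0$ and $f_{n_0}$ and invoking the convexity reduction completes this direction.

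The routine parts are the binary-divergence estimate and the bookkeeping with the floor functions in the rate constraint; the one genuinely essential input is the achievability half of Stein's lemma applied to the i.i.d.\ block structure, which is what pins the per-symbol exponent down to exactly $\frac1{n_0}D_0$. I expect the point most in need of care to be the handling of stochastic encoders: the cleanest route is the convexity-to-deterministic reduction used above, but if instead one keeps the stochastic encoder, one must apply it with fresh independent randomness in each of the $k$ blocks and then derandomize, fixing the randomness via a Markov/union-bound argument so that both error probabilities stay within a constant factor of their averages --- that derandomization step is the fiddliest to write out carefully.
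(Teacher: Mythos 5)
The paper does not prove this statement; it is quoted verbatim as Theorems 1 and 4 of Ahlswede--Csisz\'ar (1986) and used as a black box, so there is no ``paper's proof'' to compare against. Your proposal is, in substance, the standard Ahlswede--Csisz\'ar argument and both halves are correct: the converse via data processing for relative entropy applied to the indicator of the acceptance region together with the crude lower bound $d(a\|b)\ge -\log 2 + a\log(1/b)$, and achievability via the convexity-to-deterministic reduction, blocking into $k$ independent length-$n_0$ uses, and the achievability half of Stein's lemma on the resulting i.i.d.\ pairs $(M_i,Y^{n_0}_{(i)})$. The floor-function bookkeeping $2^{k\lfloor n_0R\rfloor}\le 2^{\lfloor nR\rfloor}$ is right, and discarding the trailing $r$ symbols is harmless since $kn_0/n\to 1$.

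Two small points worth being explicit about. First, the definition used in this paper writes $\lim_{n\to\infty}$, whose existence is not a priori obvious; your argument in fact supplies it, since the converse bounds $\limsup_n$ above by the supremum and the achievability bounds $\liminf_n$ below by the same quantity (modulo $\delta$), so the limit exists and equals the supremum --- but you should say this rather than assume it. Second, the convexity reduction to deterministic encoders is the clean route; your final remark about derandomizing a stochastic encoder is correct but unnecessary once the convexity argument is in place, and since the objective $P_{M|X^n}\mapsto D(P_{MY^n}\|Q_{MY^n})$ is convex (both marginals are affine images of the kernel and relative entropy is jointly convex), a maximizing vertex of the polytope of kernels is indeed a deterministic map, so nothing is lost.
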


Quantization-based lower bounds were obtained by Ahlswede and Csiszar \cite{AhlswedeCsiszar1986}:

\begin{theorem}[Theorem 5 in \cite{AhlswedeCsiszar1986}]\label{thm5Ahlswede}
For any \(R > 0\), we have
\begin{align*}
    &E_{P_{XY},Q_{XY}}(R) \geq \max_{P_{U|X}:\ I_P(U;X) \leq R} \left [ D(P_X \| Q_X) + D(P_{UY} \| Q_{UY}) \right ].
\end{align*}
\end{theorem}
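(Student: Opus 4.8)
The plan is to prove this achievability bound by exhibiting, for each fixed $P_{U|X}$ with $I_P(U;X)\le R$, a quantization code whose Type I error vanishes and whose Type II error decays with exponent at least $D(P_X\|Q_X)+D(P_{UY}\|Q_{UY})$; by the definition of $E_{P_{XY},Q_{XY}}(R)$ this gives the claim. Throughout I would set $P_{UXY}:=P_{U|X}P_{XY}$ (so $U-X-Y$ is Markov under $P$) and $Q_{UXY}(u,x,y):=P_X(x)\,P_{U|X}(u|x)\,Q_{Y|X}(y|x)$, whose $(U,Y)$-marginal is the $Q_{UY}$ of the statement --- it is the limiting joint type of $(U^n,Y^n)$ under $\mathcal H_1$ when $X^n$ is conditioned to be $P_X$-typical. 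Fix a small $\delta>0$ and first assume $I_P(U;X)<R$ (the equality case I would recover at the end by replacing $P_{U|X}$ with a slightly randomized version of strictly smaller mutual information and using continuity of $D(P_{UY}\|Q_{UY})$). Draw codewords $u^n(1),\dots,u^n(N)$ with $N=\lceil 2^{n(I_P(U;X)+\delta)}\rceil$ i.i.d.\ from $P_U^{\otimes n}$; for $\delta<R-I_P(U;X)$ and $n$ large this is at most $2^{\lfloor nR\rfloor}$. By the covering lemma --- whose one-sequence failure probability is doubly-exponentially small, hence survives a union bound over the at most exponentially many sequences in $T^n_{[P_X]_\delta}$ --- with probability tending to $1$ the codebook is good, i.e.\ every $x^n\in T^n_{[P_X]_\delta}$ is jointly $P_{UX}$-typical with some codeword; fix one such codebook. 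The encoder transmits $M=(b,j)$ with $b=\mathbf{1}[x^n\in T^n_{[P_X]_\delta}]$, and (when $b=1$) $j$ the least index with $(u^n(j),x^n)\in T^n_{[P_{UX}]_\delta}$. The detector accepts $\mathcal H_0$ iff $b=1$ and $(u^n(j),y^n)\in T^n_{[P_{UY}]_\delta}$.

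For the Type I analysis, under $\mathcal H_0$: by the law of large numbers $X^n\in T^n_{[P_X]_\delta}$ with probability $\to 1$, so $b=1$; then goodness gives $(u^n(j),x^n)\in T^n_{[P_{UX}]_\delta}$, and since $Y^n\sim P_{Y|X}^{\otimes n}(\cdot\mid x^n)$ with $U-X-Y$ Markov under $P_{UXY}$, the Markov lemma yields $(u^n(j),x^n,Y^n)\in T^n_{[P_{UXY}]_\delta}$ with probability $\to 1$, hence $(u^n(j),Y^n)\in T^n_{[P_{UY}]_\delta}$. So $\mathcal H_0$ is accepted with probability $\to 1$ and the Type I error tends to $0$.

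For the Type II analysis, under $\mathcal H_1$ (so $(X^n,Y^n)\sim Q_{XY}^{\otimes n}$) the acceptance probability is at most $\sum_{x^n\in T^n_{[P_X]_\delta}}Q_X^{\otimes n}(x^n)\,\Pr_{Y^n\sim Q_{Y|X}^{\otimes n}(\cdot\mid x^n)}\!\big[(u^n(j(x^n)),Y^n)\in T^n_{[P_{UY}]_\delta}\big]$. The first factor is at most $2^{-n(D(P_X\|Q_X)-\eta(\delta))}$ with $\eta(\delta)\to0$ (the type/Sanov estimate). For the inner probability, a good $x^n$ has its codeword $u^n$ at joint type within $\delta$ of $P_{UX}$ with $x^n$, and given $(u^n,x^n)$ the coordinates of $Y^n$ are independent with $Y_i\sim Q_{Y|X}(\cdot\mid x_i)$; a conditional-type count bounds it by $2^{-n(\theta-\eta'(\delta))}$, where $\theta=\min\{D(W_{UXY}\|Q_{UXY}):W_{UX}=P_{UX},\,W_{UY}=P_{UY}\}$ over joint laws $W_{UXY}$ (indeed, for $W_{UX}=P_{UX}$ one has $D(W_{UXY}\|Q_{UXY})=\sum_{u,x}P_{UX}(u,x)\,D(W_{Y|U=u,X=x}\|Q_{Y|X=x})$, which is exactly the type-counting exponent). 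By monotonicity of relative entropy under marginalization, $D(W_{UXY}\|Q_{UXY})\ge D(W_{UY}\|Q_{UY})=D(P_{UY}\|Q_{UY})$ for every feasible $W$, so $\theta\ge D(P_{UY}\|Q_{UY})$. Multiplying, the Type II error is at most $2^{-n(D(P_X\|Q_X)+D(P_{UY}\|Q_{UY})-\eta(\delta)-\eta'(\delta))}$; letting $n\to\infty$, then $\delta\to0$, then $\epsilon\to0$, and finally $I_P(U;X)\uparrow R$, one gets $E_{P_{XY},Q_{XY}}(R)\ge D(P_X\|Q_X)+D(P_{UY}\|Q_{UY})$, and maximizing over $P_{U|X}$ completes the proof.

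The hard part will be the Type II conditional-type estimate and its reduction to $D(P_{UY}\|Q_{UY})$: one must pin down the effective alternative $Q_{UY}$ as the $(U,Y)$-marginal of $P_X\,P_{U|X}\,Q_{Y|X}$ (not of $P_{U|X}Q_{XY}$, which would even violate the centralized Stein bound) and then pass from the exact type-counting optimum $\theta$ to $D(P_{UY}\|Q_{UY})$ through the data-processing inequality for relative entropy. Secondary technical points are the Markov lemma in the Type I step (needing strong typicality and the chain $U-X-Y$ under $P_{UXY}$), the boundary case $I_P(U;X)=R$ handled by perturbation and continuity, and the degenerate cases where the right-hand side is $+\infty$, which are immediate since then the acceptance probability under $\mathcal H_1$ is eventually $0$.
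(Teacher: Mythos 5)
The paper does not actually prove this statement --- it is quoted verbatim as Theorem 5 of \cite{AhlswedeCsiszar1986} in the literature review --- so there is no internal proof to compare against; judged on its own, your argument is the standard quantize-and-test achievability proof of that theorem and is essentially sound. The scheme (a $P_U^{\otimes n}$ codebook of rate $I_P(U;X)+\delta$, a typicality flag on $x^n$, covering plus the Markov lemma for the Type~I error, and a Sanov/conditional-type estimate followed by marginalization for the Type~II exponent) is exactly the classical route, and your remark about the meaning of $Q_{UY}$ is the right one and worth making explicit, since the paper leaves $Q_{UY}$ undefined here: the inequality holds with $Q_{UY}$ the $(U,Y)$-marginal of $P_X P_{U|X}Q_{Y|X}$, whereas with the other natural reading (the marginal of $P_{U|X}Q_{XY}$) the choice $U=X$, $R\ge H_P(X)$ would make the right-hand side exceed the centralized bound $D(P_{XY}\|Q_{XY})$, so that reading cannot be intended. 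Two minor observations: what your type counting actually delivers is the stronger exponent $D(P_X\|Q_X)+\min_W\{D(W_{UXY}\,\|\,P_{UX}Q_{Y|X}):W_{UX}=P_{UX},\ W_{UY}=P_{UY}\}$, equivalently $\min_W D(W_{UXY}\,\|\,P_{U|X}Q_{XY})$ under the same constraints, which is the form in which the Ahlswede--Csisz\'ar/Han quantization bound is usually recorded, and the stated sum bound then follows by precisely your data-processing step; and in the boundary case $I_P(U;X)=R$, lower semicontinuity of relative entropy in the pair (rather than full continuity) is all you need when letting the perturbation vanish, with the usual nested-typicality and $\eta(\delta),\eta'(\delta)$ bookkeeping (including the case of zeros in $Q$) being routine to fill in.
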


The use of \(P_{U|X}\) implies a Markov chain \(U \rightarrow X \rightarrow Y\). The lower bound is not generally tight, as mentioned in \cite[Section 3]{AhlswedeCsiszar1986}. The work is then extended and improved by Han \cite{TeHan1987}, and Shimokawa, Han, and Amari (SHA) using the quantization-and-binning technique \cite{Shimokawa1994}. Further improvements following this line include \cite{Haim2016binary}, \cite{weinberger2019reliability}, \cite{Watanabe2022sub}, \cite{Kochman2023improved}, and \cite{kochman2021communication}. 

In terms of the upper bound, the following facts and results are known: The upper bound given by the Neyman-Pearson Lemma in non-distributed hypothesis testing provides the following trivial but useful benchmark at high rates. This bound is called the centralized bound:
\begin{align}
E_{P_{XY},Q_{XY}}(R)\leq D(P_{XY}\|Q_{XY}). \label{centralized_bound}
\end{align}
The centralized bound is tight when $R\geq D(P_{XY}\|Q_{XY})+H_P(X|Y)$, see \cite{Shimokawa1994}. One of the results of this paper is that if the centralized bound is tight, we must have
$R\geq H_P(X|Y)$. Moreover, in some cases, we can provide better bounds than $H_P(X|Y)$. 

In addition to characterizing the optimal exponent at high rates when $R\geq D(P_{XY}|Q_{XY})+H_P(X|Y)$, the optimal exponent is also known in some other special settings. 
The authors in \cite{AhlswedeCsiszar1986} found the optimal exponent for the special case of testing against independence where \(Q_{XY} = P_X P_Y\). Rahman and Wagner \cite{RahmanWagner2011} expanded on this by finding the optimal exponent for the "testing against conditional independence" setting. At zero rate, when we allow for one bit of communication, Han \cite{TeHan1987} computed the optimal exponent.

In this paper, we are mainly concerned with upper bounds on the exponent $E_{P_{XY}, Q_{XY}}(R)$.
Authors in \cite{RahmanWagner2011}  obtained the following upper bound:

\begin{definition}[Eqns (C14) and (C15) in \cite{RahmanWagner2011}]\label{definition_1}
Let \(  \mathcal{R}(P_{XY}, Q_{XY})\) be the set of all \((P_{Z|XY}, Q_{Z|XY})\) such that
\begin{align}
      Q_{YZ|X} &= Q_{Z|X} Q_{Y|Z}, \label{eqnDef2an} \\
     P_{XZ} &= Q_{XZ}.\label{eqnDef2bn} 
\end{align} 
Let $\mathcal{S}(P_{XYZ})$ be the set of all $P_{U|X}$ such that $P_{UXYZ}=P_{U|X}P_{XYZ}$ satisfies
$$I_P(X;U|Z)\leq R.$$
\end{definition}

\begin{theorem}[Theorem 3, Corollary 5 in \cite{RahmanWagner2011}]
We have
\begin{align*}
     E_{P_{XY},Q_{XY}}(R) \leq &\min_{P_{Z|XY}, Q_{Z|XY} \in \mathcal{R}(P_{XY}, Q_{XY})} [ D(P_{YZ} \| Q_{YZ}) +  \max_{P_{U|X}\in\mathcal{S}(P_{XYZ})} I_P(Y; U | Z)].
\end{align*}
\label{RW_thm}
\end{theorem}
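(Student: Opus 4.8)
The plan is to start from the multi-letter formula of Theorem~\ref{thm14Ahlswede}: it suffices to show that for every block length $n$, every encoder $P_{M|X^n}$ with $M$ ranging over $\mathcal{M}$, and every fixed pair $(P_{Z|XY},Q_{Z|XY})\in\mathcal{R}(P_{XY},Q_{XY})$, one has $\tfrac1n D(P_{MY^n}\|Q_{MY^n})\le D(P_{YZ}\|Q_{YZ})+\max_{P_{U|X}\in\mathcal{S}(P_{XYZ})}I_P(Y;U|Z)$; taking the supremum over $(n,P_{M|X^n})$ and the minimum over $\mathcal{R}(P_{XY},Q_{XY})$ then yields the theorem. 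The auxiliary-receiver device is to adjoin the variable $Z^n$, formed from $(X^n,Y^n)$ through $\prod_i P_{Z_i|X_iY_i}$ under $\mathcal{H}_0$ and through $\prod_i Q_{Z_i|X_iY_i}$ under $\mathcal{H}_1$, leaving the encoder untouched so that $M\rightarrow X^n\rightarrow(Y^n,Z^n)$ under both hypotheses. By the data-processing inequality for relative entropy, $D(P_{MY^n}\|Q_{MY^n})\le D(P_{MY^nZ^n}\|Q_{MY^nZ^n})$, and the chain rule writes the right side as $D(P_{Y^nZ^n}\|Q_{Y^nZ^n})+D(P_{M|Y^nZ^n}\|Q_{M|Y^nZ^n}\mid P_{Y^nZ^n})$. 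Since $(Y_i,Z_i)$ is i.i.d.\ under each hypothesis, the first term is precisely $nD(P_{YZ}\|Q_{YZ})$.

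The add-and-subtract step handles the second term, and it is here that both defining properties of $\mathcal{R}$ are consumed. Property~\eqref{eqnDef2an} states that $X\rightarrow Z\rightarrow Y$ under $Q$, which tensorizes to $X^n\rightarrow Z^n\rightarrow Y^n$ under $Q$; combined with $M\rightarrow X^n\rightarrow(Y^n,Z^n)$ this gives $Q_{M|Y^nZ^n}=Q_{M|Z^n}$, so $Y^n$ may be dropped from the reference measure. Property~\eqref{eqnDef2bn}, $P_{XZ}=Q_{XZ}$, together with the i.i.d.\ structure (which makes $P_{X^n|Z^n}=Q_{X^n|Z^n}$ a product of $P_{X|Z}$'s) and $M\rightarrow X^n\rightarrow Z^n$, gives $Q_{M|Z^n}=P_{M|Z^n}$, replacing the reference measure by $P_{M|Z^n}$. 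Hence the second term equals $D(P_{M|Y^nZ^n}\|P_{M|Z^n}\mid P_{Y^nZ^n})=I_P(M;Y^n|Z^n)$, leaving $\tfrac1n D(P_{MY^n}\|Q_{MY^n})\le D(P_{YZ}\|Q_{YZ})+\tfrac1n I_P(M;Y^n|Z^n)$.

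It remains to single-letterize $\tfrac1n I_P(M;Y^n|Z^n)$ within the rate budget. I would take $U_i=(M,Y^{i-1},Z^{i-1},Z_{i+1}^n)$ and verify, all under $P$: (i) $I_P(M;Y^n|Z^n)\le\sum_i I_P(U_i;Y_i|Z_i)$, by writing $I_P(M;Y^n|Z^n)=\sum_i I_P(M;Y_i|Y^{i-1},Z^n)$ and inserting the non-negative terms $I_P(Y^{i-1},Z^{i-1},Z_{i+1}^n;Y_i|Z_i)$; (ii) $\sum_i I_P(U_i;X_i|Z_i)\le I_P(M;X^nY^n|Z^n)=I_P(M;X^n|Z^n)\le\log|\mathcal{M}|\le nR$, where the i.i.d.\ structure annihilates the cross terms $I_P(Z^{i-1},Z_{i+1}^n;X_i|Z_i)$ and $I_P(Y^{i-1};X_i|Z^n)$ and then a telescoping chain-rule bound together with $M\rightarrow X^n\rightarrow(Y^n,Z^n)$ finishes it; and (iii) the Markov relation $U_i\rightarrow X_i\rightarrow(Y_i,Z_i)$, which holds because, conditioned on $X_i$, the $i$-th column $(Y_i,Z_i)$ is independent of all the other columns and of $M$. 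Introducing a time-sharing variable $T$ uniform on $\{1,\dots,n\}$ and setting $U=(U_T,T)$, $(X,Y,Z)=(X_T,Y_T,Z_T)\sim P_{XYZ}$, items (ii) and (iii) give $P_{U|X}\in\mathcal{S}(P_{XYZ})$, and item (i) gives $\tfrac1n I_P(M;Y^n|Z^n)\le I_P(Y;U|Z)\le\max_{P_{U|X}\in\mathcal{S}(P_{XYZ})}I_P(Y;U|Z)$, which is exactly what was needed.

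The main obstacle is step (ii): arranging that the rate term emerges as $I_P(X;U|Z)\le R$ — matching the constraint in $\mathcal{S}(P_{XYZ})$ — is what forces the asymmetric choice of $U_i$, mixing the ``$Y$-past'' with both the ``$Z$-past'' and the ``$Z$-future'', and it demands careful tracking of which conditional mutual informations vanish under the i.i.d.\ assumption. By contrast, the add-and-subtract step, though conceptually central, reduces to a short computation once the two identities $Q_{M|Y^nZ^n}=Q_{M|Z^n}$ and $Q_{M|Z^n}=P_{M|Z^n}$ are extracted from the two constraints defining $\mathcal{R}$.
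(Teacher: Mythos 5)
Your proof is correct, but it follows a genuinely different route from the one this paper uses to recover the statement. The paper never re-proves the Rahman--Wagner bound directly: it obtains it as a consequence of its own machinery, namely Theorem~\ref{theorem_1} (the add-and-subtract decomposition $D(P_{MY^n}\|Q_{MY^n})=[D(P_{MY^n}\|Q_{MY^n})-D(P_{MZ^n}\|Q_{MZ^n})]+D(P_{MZ^n}\|Q_{MZ^n})$, with the difference term telescoped into single-letter divergence differences $D(P_{U_iY_i}\|Q_{U_iY_i})-D(P_{U_iZ_i}\|Q_{U_iZ_i})$ and bounded by $G_{P_{XY},Q_{XY},P_{XZ},Q_{XZ}}(R)$), followed by Theorem~\ref{theorem_wagner} and Lemma~\ref{lemfG} to bound $G$ by $I_P(Y;U|Z)+D(P_{YZ}\|Q_{YZ})$ when $(P_{Z|XY},Q_{Z|XY})\in\mathcal{R}$, and finally Corollary~\ref{corr-weaker} using $E_{P_{XZ},Q_{XZ}}(R)=0$ since $P_{XZ}=Q_{XZ}$. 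You instead append $Z^n$ by data processing, split off $nD(P_{YZ}\|Q_{YZ})$ by the chain rule, and use the two defining properties of $\mathcal{R}$ to collapse the remaining conditional divergence to $I_P(M;Y^n|Z^n)$ before single-letterizing --- essentially the original \cite{RahmanWagner2011} argument, in the same spirit as Lemma~\ref{lemmaYuval} of \cite{Yuval2024}. Your route is shorter, self-contained, needs no finiteness hypotheses (unlike Theorem~\ref{theorem_1}, which assumes $D(P_{XY}\|Q_{XY})<\infty$), but uses both constraints of $\mathcal{R}$ at the outset and only delivers the rate constraint $I_P(U;X|Z)\le R$, i.e.\ the set $\mathcal{S}(P_{XYZ})$; the paper's route is more elaborate but works for arbitrary auxiliary receivers and automatically produces the additional constraint $I_P(U;X|Y)\le R$ (the set $\mathcal{T}$), which is exactly where its strict improvement over \cite{RahmanWagner2011} (e.g.\ in the Gaussian example) comes from. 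Two small points you should make explicit in a full writeup: in step (ii), the passage from $\sum_i I_P(M;X_i|Y^{i-1}Z^n)$ to $I_P(M;X^nY^n|Z^n)$ needs the intermediate facts $I_P(X^{i-1};X_i|Y^{i-1}Z^n)=0$ (i.i.d.\ columns) and the pairwise chain-rule expansion, mirroring \eqref{eqnMXZ1}--\eqref{eqnMXZ2}; and the identities $Q_{M|Y^nZ^n}=Q_{M|Z^n}=P_{M|Z^n}$ are only defined $Q$-a.s., which is harmless because whenever $P_{Y^nZ^n}\not\ll Q_{Y^nZ^n}$ the bound is vacuous ($D(P_{YZ}\|Q_{YZ})=\infty$).
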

Rahman and Wagner interpret \( Z \) as side information by employing the following inequality:
\[
D(P_{MY^n} \| Q_{MY^n}) \leq D(P_{MY^nZ^n} \| Q_{MY^nZ^n})
\]
where $M$ is the transmitted message. In other words, receiver $Y$ is replaced with $Z'=(Y,Z)$.
In our derivation, we bypass this step and instead utilize a telescopic sum expansion. Consequently, \( Z \) does not function as side information but rather as a general auxiliary receiver.

Another upper bound by Hadar et al. \cite{hadar2019error} involves testing for correlation between Gaussian sequences using the Ornstein-Uhlenbeck process and optimal transport techniques. Notably, neither of the bounds in \cite{RahmanWagner2011} and \cite{hadar2019error} is superior to the other one across all parameters for the Gaussian problem.

A more recent improvement is obtained in \cite{Yuval2024} by giving the detector not just one but a series of side information (in the sense of \cite{RahmanWagner2011}). This idea is used in \cite{Yuval2024} to improve over the bound in \cite{RahmanWagner2011}. The idea of using multiple auxiliary receivers (instead of just one auxiliary receiver) is independent of the idea discussed in this paper; indeed, our approach can be combined with the idea in \cite{Yuval2024} to potentially develop even better bounds.

In this paper, we propose a computable upper bound on $E_{P_{XY},Q_{XY}}(R)$ using a systematic manipulation on the auxiliary receivers \cite{gon21}. The auxiliary receiver approach is a technique for writing new outer bounds for different communication problems. It has been used in \cite{gon21} to derive new bounds for the broadcast, relay, and interference channels. 
In this paper, we utilize the "add-and-subtract" technique of the auxiliary receiver approach to derive a new upper bound for the distributed hypothesis testing problem.\footnote{The \textit{message augmentation} technique of the auxiliary receiver approach appears to yield only the bound presented in Proposition \ref{thmJ}, which can be proven more readily through a direct method.}

 Finally, we note that the literature studies different variants and extensions of the problem considered here \cite{escamilla2018, salehkalaibar2020, weinberger2019b, watanabe2018, tian2008, salehkalaibar2018,xiang2013interactive,katz2016collaborative,Shalaby1992multiterminal,sadaf20,Escamilla20, han2006exponential, hamad2023multi, zaidi2022rate, inan2022fundamental, escamilla2020distributed, zhao2014distributed, zhao2018distributed, sreekumar2019distributed, sreekumar2024distributed, weinberger2019reliability, xu2022distributed, salehkalaibar2018distributed, espinosa2024survey}), and our technique may be applied to those settings as well.

\subsection{Notation}
We adopt the following standard notation: for any sequence $(X_1, X_2, \cdots, X_n)$ of random variables, we define
 \begin{align*}
 & X^n = (X_1, X_2, \ldots, X_n), \\
  & X^i = (X_1, \ldots, X_i), \\
 & X_{i+1}^n = (X_{i+1}, X_{i+2}, \ldots, X_n),\\
 & X_{\sim i} = (X_1, \ldots, X_{i-1}, X_{i+1}, \ldots, X_n).
 \end{align*} 
For two distributions $\hat{Q}_X$ and $Q_X$, we say that $\hat Q_X\ll Q_X$ if $\hat Q_X(x)=0$ whenever $Q_X(x)=0$.

All the logs in this paper are in base $e$, unless stated otherwise. Similarly, all the entropies are measured in nats.

\section{Upper bound using the auxiliary receiver approach}
\subsection{The auxiliary receiver approach}
Here, we review the add-and-subtract idea of the auxiliary receiver approach \cite{gon21}.
Let us view $E(P_{XY},Q_{XY},R)$ as a function of $P_{XY},Q_{XY},R$. 
The domain of this function is the set of all triples $(P_{XY},Q_{XY},R)$.  
To study the curvature of this function, we can take another point $(P_{X'Y'},Q_{X'Y'},R')$ in the domain, and study the difference $E(P_{XY},Q_{XY},R)-E(P_{X'Y'},Q_{X'Y'},R')$.  Figure \ref{firstfig}  visualizes this.  
In this work, we use the second point $(P_{X'Y'}, Q_{X'Y'}, R')$ as
$$(P_{XZ}, Q_{XZ}, R)$$
so that $X'=X, Y'=Z, R'=R$ for some "auxiliary receiver $Z$". 
The difference $E(P_{XY}, Q_{XY}, R) - E(P_{XZ}, Q_{XZ}, R)$ quantifies the curvature of the plot of the function.
We would like to find an upper bound $G_{P_{XY},Q_{XY},P_{XZ},Q_{XZ}}(R)$ on the gap $$E(P_{XY}, Q_{XY}, R) - E(P_{XZ}, Q_{XZ}, R)\leq G_{P_{XY},Q_{XY},P_{XZ},Q_{XZ}}(R).$$
Finding an upper bound on this gap leads to an upper bound on $E(P_{XY},Q_{XY},R)$ via an add-and-subtract manipulation as follows: 
 \begin{align*}
            &E(P_{XY},Q_{XY},R)\\
            &=E(P_{XY},Q_{XY},R)-E(P_{XZ},Q_{XZ},R)+E(P_{XZ},Q_{XZ},R)
            \\&\leq       G_{P_{XY},Q_{XY},P_{XZ},Q_{XZ}}(R)+E(P_{XZ},Q_{XZ},R).
            \end{align*}

The auxiliary receiver $Z$ is chosen such that either the value of $E(P_{XZ}, Q_{XZ}, R)$ is known, or an upper bound on it is available.  
The special case where $Y$ is a function of $Z$ corresponds to a Genie-aided proof. The "side information" approach of Rahman and Wagner \cite{RahmanWagner2011} is an example of such a choice (the random variable $Z$ in the bound of Rahman and Wagner corresponds to the auxiliary receiver $Z'=(Y,Z)$). However, the auxiliary receiver $Z$ does not need to be an enhanced version of $Y$, and this is our main idea to obtain new upper bounds.

\begin{figure}[htbp]
\centering 
\begin{tikzpicture}

  \draw[->, thick] (0,0,0) -- (3,0,0) node[below right]{};
  \draw[->, thick] (0,0,0) -- (0,3,0) node[left]{$E$};
  \draw[->, thick] (0,0,0) -- (0,0,3) node[below left]{};




  \draw[blue, thick, domain=-0.1:1.95, smooth, variable=\x] 
    plot ({\x}, {-1.2*(\x+0.1)*(\x-1.1)*(\x-2) + 1.5}, 1);
  \node at (2,2.3,2) [above right, text=blue]{$E(P_{XY}, Q_{XY}, R)$};
  \draw (2,2.3,2) node[circle, fill=blue, inner sep=.8pt]{};
  \draw[dashed, gray] (2,0.2,2) -- (2,2.3,2);
  \draw (2,0.2,2) node[circle, fill=blue, inner sep=.8pt]{};
  \node at (2,0.2,2) [right, text=blue]{$(P_{XY}, Q_{XY}, R)$};
  
    \draw[blue, thick, domain=1.6:1.85, smooth, variable=\x] 
    plot ({\x}, {-10*(\x-1.6)*(\x-1.61)+1.9}, 1);
    \draw[blue, thick, dashed, domain=1.35:1.6, smooth, variable=\x] 
        plot ({\x}, {-10*(\x-1.6)*(\x-1.61)+1.9}, 1);
    
    \draw[blue, thick, domain=0.6:0.85, smooth, variable=\x] 
    plot ({\x}, {-10*((\x+1)-1.6)*((\x+1)-1.61)+0.9}, 1);
    
    \draw[blue, thick, dashed, domain=0.35:0.85, smooth, variable=\x] 
    plot ({\x}, {-10*((\x+1)-1.6)*((\x+1)-1.61)+0.9}, 1);

 {\node at (0.4,1.5,2) [left, text=green] {$E(P_{X'Y'},Q_{X'Y'},R')$};
 \draw (0.4,1.57,2) node[circle, fill=green, inner sep=.8pt]{};

  \draw[dashed, gray] (0.4,0,2) -- (0.4,1.5,2);
  \draw (0.4,0,2) node[circle, fill=green, inner sep=.8pt]{};
  \node at (0.4,-0.1,2) [below right, text=green]{$(P_{X'Y'}, Q_{X'Y'}, R')$};}

    \draw[dashed, red] (0.4,1.57,2) -- (2,1.9,2);

    \draw[red, thick] (2,1.9,2) -- (2,2.3,2);

    \node at (2,1.7,2) [right, text=red]{$E(P_{XY}, Q_{XY}, R) - E(P_{X'Y'}, Q_{X'Y'}, R')$};

\end{tikzpicture}  
\caption{}
\label{firstfig}
\end{figure}
\subsection{Review of the add-and-subtract technique}
The add-and-subtract technique works as follows:
\begin{enumerate}
    \item Begin with a multi-letter expression that can be utilized to derive single-letter upper bounds on capacity. For example, in the point-to-point communication setting, when transmitting a message \( M \) over a channel \( p(y|x) \), we can consider \( I(M;Y^n) \) as a multi-letter proxy for capacity. In the case of a relay channel \( p(y,y_r|x,x_r) \), where \( x_r \) and \( y_r \) represent the relay's input and output, we can treat \( I(M;Y^n) \) or \( I(M;Y^nY_r^n) \) as multi-letter expressions serving as upper bounds on the communication rate.

    \item Introduce a new auxiliary receiver, denoted as \( Z \). Consider a new expression by substituting the auxiliary receiver \( Z \) for one of the original receivers. For instance, we can consider \( I(M;Z^n) \) or \( I(M;Z^nY_r^n) \) instead of \( I(M;Y^n) \) or \( I(M;Y^nY_r^n) \). We call this term the \emph{substituted term}.

    \item Add and subtract the substituted term from the original multi-letter expression. For instance, if $I(M;Y^n)$ is the original expression and $I(M;Z^n)$ is the substituted term, we can write:
    \[\underbrace{I(M;Y^n)}_{\text{original term}}=
    \underbrace{I(M;Y^n) - I(M;Z^n)}_{\text{difference term}} + \underbrace{I(M;Z^n)}_{\text{substituted term}}.
    \]

    \item Single-letterize the "difference term" using a telescopic sum expansion. For instance,  \( I(M;Y^n) - I(M;Z^n) \) may be single-letterized as follows:
    \begin{align*}
        &I(M;Y^n) - I(M;Z^n) 
        = \sum_{i} \left[ I(M;Y^{i}Z_{i+1}^n) - I(M;Y^{i-1}Z_{i}^n) \right].
    \end{align*}

    \item Lastly, reduce the "substituted term" to a single-letter form by following the procedure of an existing upper bound. This is done by mimicking the bound while formally replacing a legitimate receiver with the auxiliary receiver. For example, in the context of a relay channel, consider the cut-set bound and substitute the original receiver $Y$ with the auxiliary receiver $Z$ throughout all the steps of the bound.
\end{enumerate}
This method produces novel upper bounds for relay, broadcast, and interference channels \cite{gon21}. 

A formal proof of applying the add-and-subtract technique to the distributed hypothesis testing problem is presented in Section \ref{sec:addsubtractdetails}. In the remainder of this section, we provide an intuitive discussion of the concept.

For the distributed hypothesis testing problem, we consider $D(P_{MY^n} \| Q_{MY^n})$ as our multi-letter expression (see Theorem \ref{thm1cz}). After replacing $Y$ with an auxiliary receiver $Z$, we obtain the expression $D(P_{MZ^n} \| Q_{MZ^n})$
as our "substituted term". Finally, we form and expand the difference term
$D(P_{MY^n} \| Q_{MY^n})-D(P_{MZ^n} \| Q_{MZ^n})$
via a telescopic sum as follows:
\begin{align}
&\frac1n{\color{black}D(P_{M Y^n} \| Q_{M Y^n} ) - \frac1n D(P_{M Z^n} \| Q_{M Z^n} )} \nonumber
\\ &= \frac1n\sum_{i=1}^n \left [  D(P_{M Y_{1:i-1} Y_i Z_{i+1:n}} \| Q_{M Y_{1:i-1} Y_i Z_{i+1:n}})  - D(P_{M Y_{1:i-1} Z_i Z_{i+1:n}} \| Q_{M Y_{1:i-1} Z_i Z_{i+1:n}}) \right ] \nonumber
\\ &= \frac1n\sum_{i=1}^n \left [  D(P_{U_iY_i} \| Q_{U_i Y_i}) - D(P_{U_iZ_i} \| Q_{U_iZ_i}) \right ] \nonumber
\end{align} 
where $U_i=M Y_{1:i-1}Z_{i+1:n}$. With this identification for $U_i$, one can also show that
$\sum I(U_i;X_i|Y_i)\leq nR$ and $\sum I(U_i;X_i|Z_i)\leq nR$. 
Using the above equations, we can show that the difference term is less than or equal to 
$$\sup_{P_{U|X},Q_{U|X}} [D(P_{UY}\|Q_{UY})-D(P_{UZ}\|Q_{UZ})]$$
where $P_{U|X}$ and $Q_{U|X}$ satisfy $I_P(U;X|Y)\leq R$, $I_P(U;X|Z)\leq R$, $I_Q(U;X|Y)\leq R$ and $I_Q(U;X|Z)\leq R$. We further simplify this expression by dropping the constraints \(I_Q(U;X|Y) \leq R\) and \(I_Q(U;X|Z) \leq R\), and by bounding the objective function from above. This allows us to obtain a simpler expression for comparison with previously known bounds. In  Section \ref{sec:addsubtractdetails},  we define $G_{P_{XY},Q_{XY},P_{XZ},Q_{XZ}}(R)$ as an upper bound on the difference term. 
 This completes Step 4 of the above procedure and yields:
 \begin{align*}
 &\underbrace{\frac1n D(P_{M Y^n} \| Q_{M Y^n})}_{\text{original term}} \\
 &=
    \underbrace{\frac1n{\color{black}D(P_{M Y^n} \| Q_{M Y^n} ) - \frac1n D(P_{M Z^n} \| Q_{M Z^n} )}}_{\text{difference term}} + \underbrace{\frac1n D(P_{M Z^n} \| Q_{M Z^n} )}_{\text{substituted term}}
    \\&\leq G_{P_{XY},Q_{XY},P_{XZ},Q_{XZ}}(R)+\underbrace{\frac1n D(P_{M Z^n} \| Q_{M Z^n} )}_{\text{substituted term}}.
    \end{align*}
 We do not perform Step 5 in this paper and will state the bound simply as follows:
\begin{align*}
 & E_{P_{XY},Q_{XY}}(R) \leq G_{P_{XY},Q_{XY},P_{XZ},Q_{XZ}}(R)+E_{P_{XZ},Q_{XZ}}(R).
\end{align*}
Indeed, any upper bound on $E_{P_{XZ},Q_{XZ}}(R)$ can be used to derive an upper bound on $E_{P_{XY},Q_{XY}}(R)$. In particular, using the centralized bound $E_{P_{XZ},Q_{XZ}}(R)\leq D(P_{XZ}\|Q_{XZ})$, we obtain
\begin{align*}
 & E_{P_{XY},Q_{XY}}(R) \leq G_{P_{XY},Q_{XY},P_{XZ},Q_{XZ}}(R)+D(P_{XZ}\|Q_{XZ}).
\end{align*}
However, it is unclear whether using the centralized upper bound is the best choice.

\begin{remark}
   It is useful to compare the distributed hypothesis testing problem with the source model problem in the key agreement context (where the choice of upper bound in Step 5 is clear). We refer the reader to Chapter 22 of \cite{elk11} for the definition of the key rate capacity \( S(X;Y\|Z) \). Given a source \( P_{XYZ} \) and any auxiliary receiver \( J \), we can show that:
\[
S(X;Y\|Z) \leq \Delta(P_{XYZ}, P_{XYJ}) + S(X;Y\|J),
\]
where 
\[
\Delta(P_{XYZ}, P_{XYJ}) = \max_{P_{UV|XY}} [I(U;Z|V) - I(U;J|V)].
\]
The term \( \Delta(P_{XYZ}, P_{XYJ}) \) serves as an upper bound on the "difference term", and can be obtained through a telescopic sum expansion. In the case of the key agreement problem, we can complete Step 5. Notably, if we adopt the simple upper bound \( S(X;Y\|J) \leq I(X;Y|J) \), we arrive at the following upper bound for \( S(X;Y|Z) \), which is currently the best known upper bound for this problem \cite{gohari2010information}:
\[
S(X;Y\|Z) \leq \Delta(P_{XYZ}, P_{XYJ}) + I(X;Y|J).
\]
\end{remark}

\subsection{The bound obtained using the add-and-subtract technique}
\label{sec:addsubtractdetails}

\begin{definition}
    Let $\mathcal{T}(P_{XY},P_{XZ})$ to be the set of all $P_{U|X}$ such that 
    \begin{align*}
    R\geq \max[I_P(U;X|Z), I_P(U;X|Y)]
\end{align*}
where the mutual information terms $I_P(U;X|Z), I_P(U;X|Y)$ are computed according to $P_{U|X}P_{XZ}$ and $P_{U|X}P_{XY}$ respectively.
\end{definition}

\begin{definition}
    For any $P_X$, $Q_X$, $P_{Y|X}$, $P_{Z|X}$, $Q_{Y|X}$ and $Q_{Z|X}$, we define
\begin{align*}
 & f_{P_{Y|X}, P_{Z|X}, Q_{Y|X}, Q_{Z|X}}(P_X) = \max_{\hat{Q}_X } [D(P_Y \| \hat{Q}_Y) - D(P_Z \| \hat{Q}_Z) ]
\end{align*} 
where the maximum is over all $\hat{Q}_X\ll Q_X$ and
$$\hat{Q}_Y(y)=\sum_x\hat{Q}_X(x)Q_{Y|X}(y|x),$$
$$\hat{Q}_Z(z)=\sum_x\hat{Q}_X(x)Q_{Z|X}(z|x).$$
Next, let 
\begin{align}
    &G_{P_{XY},Q_{XY},P_{XZ},Q_{XZ}}(R) =\max_{P_{U|X}\in\mathcal{T}(P_{XY},P_{XZ})} \sum_u P(u) f( P_{X|U}(\cdot | u)).\label{eqnDefCf}
\end{align}
\label{f_definition}
\end{definition}
\begin{remark}
Using the standard application of the Caratheodory theorem, in equation \eqref{eqnDefCf}, it suffices to restrict to $|\mathcal U|\leq |\mathcal X|+2$. 
\end{remark}

\begin{theorem}\label{theorem_1} 
Assume that $P_{XYZ}$ and $Q_{XYZ}$ satisfy $D(P_{XY}\|Q_{XY})<\infty$ and $D(P_{XZ}\|Q_{XZ})<\infty$. Then, we have
\begin{align*}
 & E_{P_{XY},Q_{XY}}(R) \leq G_{P_{XY},Q_{XY},P_{XZ},Q_{XZ}}(R)+E_{P_{XZ},Q_{XZ}}(R).
\end{align*}
\label{main_Thm}
\end{theorem}
\begin{proof}

For a given $n$, we assume that $(X^n,Y^n,Z^n)$ are i.i.d.\ under the two hypothesis:
$$P_{X^n,Y^n,Z^n}=\prod_{i=1}^nP_{X_i,Y_i,Z_i},$$
$$Q_{X^n,Y^n,Z^n}=\prod_{i=1}^nQ_{X_i,Y_i,Z_i}.$$
Given some encoding map from $X^n$ to $M$, we assume
$$P_{M,X^n,Y^n,Z^n}=P_{M|X^n}P_{X^n,Y^n,Z^n},$$
$$Q_{M,X^n,Y^n,Z^n}=Q_{M|X^n}Q_{X^n,Y^n,Z^n}.$$
We have $P_{M|X^n}=Q_{M|X^n}$ as the encoding map \(f_n(\cdot)\) is the same under the two hypotheses.

Take some $\epsilon>0$. Let $n$ and $P_{M|X^n}$ be 
such that
\[E_{P_{XY},Q_{XY}}(R)\leq \frac{1}{n} D(P_{MY^n} \| Q_{MY^n})+\epsilon.\]

Next, we utilize the add-and-subtract technique:
\begin{align*}
 & \frac1n D(P_{M Y^n} \| Q_{M Y^n} )  
 \\&= {\color{black}\frac1n D(P_{M Y^n} \| Q_{M Y^n} ) - \frac1n D(P_{M Z^n} \| Q_{M Z^n} )} + {\color{black}\frac1n D(P_{M Z^n} \| Q_{M Z^n} ) }.
\end{align*} 
From Theorem \ref{thm14Ahlswede}, we have
$${\color{black}\frac1n D(P_{M Z^n} \| Q_{M Z^n} ) }\leq E_{P_{XZ},Q_{XZ}}(R).$$
Next, we use the telescopic-sum idea of the add-and-subtract technique to expand the difference term:
\begin{align}
&{\color{black}D(P_{M Y^n} \| Q_{M Y^n} ) - D(P_{M Z^n} \| Q_{M Z^n} )} \nonumber
\\ &= \sum_{i=1}^n [ D(P_{M Y_{1:i-1} Y_i Z_{i+1:n}} \| Q_{M Y_{1:i-1} Y_i Z_{i+1:n}}) - D(P_{M Y_{1:i-1} Z_i Z_{i+1:n}} \| Q_{M Y_{1:i-1} Z_i Z_{i+1:n}}) ]\label{eqnf0}
\end{align} 
where $D(P_{MY^{i}Z_{i+1}^n}\|Q_{MY^{i}Z_{i+1}^n})$ is finite for every $i$ because 
\begin{align*}
    & D(P_{MY^{i}Z_{i+1}^n}\|Q_{MY^{i}Z_{i+1}^n}) 
    \\ &\leq D(P_{X^nY^{i}Z_{i+1}^n}\|Q_{X^nY^{i}Z_{i+1}^n}) 
    \\&= 
iD(P_{XY}\|Q_{XY})+(n-i)D(P_{XZ}\|Q_{XZ}) 
\\&<\infty.
\end{align*}

Let $$U_i = (M,Y_{1 : i-1},Z_{i+1 : n}).$$ 
Let $T\in\{1,2,\cdots, n\}$ be a time-sharing random variable, independent of all previously defined variables, and set
$$U=(T,U_T), \quad X=X_T,\quad Y=Y_T,\quad Z=Z_T.$$

Observe that
\begin{align*}
 & \sum_{i=1}^n [  D(P_{M Y_{1:i-1} Y_i Z_{i+1:n}} \| Q_{M Y_{1:i-1} Y_i Z_{i+1:n}}) 
 - D(P_{M Y_{1:i-1} Z_i Z_{i+1:n}} \| Q_{M Y_{1:i-1} Z_i Z_{i+1:n}}) ] \\
 &= \sum_{i=1}^n [D(P_{U_i Y_i} \| Q_{U_i Y_i}) - D(P_{U_i Z_i} \| Q_{U_i Z_i})] \\
 &= \sum_{i=1}^n [\underbrace{D(P_{U_i} \| Q_{U_i})}_{\text{cancelled}} + D(P_{Y_i | U_i} \| Q_{Y_i|U_i} | P_{U_i})  
 - \underbrace{D(P_{U_i} \| Q_{U_i})}_{\text{cancelled}} - D(P_{Z_i | U_i} \| Q_{Z_i | U_i} | P_{U_i})].
\end{align*} 
Observe that under either $P_{M,X^n,Y^n,Z^n}$ or $Q_{M,X^n,Y^n,Z^n}$, we have the following Markov chain:
\[
U_i \rightarrow X_i \rightarrow Y_i Z_i
\]
as
\(
0 \leq I(M Y_{1:i-1} Z_{i+1:n} ; Y_i Z_i | X_i) \leq I(X^n Y_{1:i-1} Z_{i+1:n} ; Y_i Z_i | X_i) = 0
\) by the i.i.d assumption. Then
\begin{align*} 
&\sum_{i=1}^n  [D(P_{Y_i | U_i} \| Q_{Y_i\|U_i} | P_{U_i})  - D(P_{Z_i | U_i} \| Q_{Z_i | U_i} | P_{U_i})]
\\
 &=\sum_{i=1}^n \sum_{u_i} P_{U_i}(u_i) \left[  D(P_{Y_i|u_i} \| Q_{Y_i|u_i} ) - D( P_{Z_i|u_i} \| Q_{Z_i|u_i} )  \right] 
 \\
 &=n\cdot \sum_{u} P_{U}(u) \left[  D(P_{Y|u} \| Q_{Y|u} ) - D( P_{Z|u} \| Q_{Z|u} )  \right] 
 \\
 &\leq n\cdot \sum_{u} P_{U}(u) \max_{\hat{Q}_X \ll Q_X}\left[  D(P_{Y|u} \| \hat Q_{Y} ) - D( P_{Z|u} \| \hat Q_{Z} )  \right]
\end{align*} 
where the maximum is over $\hat{Q}_X\ll Q_X$ (observe that if $Q_X(x)=0$, then $Q_{X|u}(x)=0$) and $\hat Q_Y$, $\hat Q_Z$ are given by Definition \ref{f_definition}. 

Next, we have $I_P(U ; X | Z)\leq R$ since 
 \begin{align}
  nR&\geq H_P(M)\label{eqnf2}
  \\&\geq I_P(M; X^n | Z^n)\label{eqnf3}
 \\&= \sum_{i=1}^{n} I_P(M; X_i | Z^n X^{i-1}) \nonumber\\&=\sum_{i=1}^{n}I_P(M X^{i-1} Z_{\sim i} ; X_i | Z_i) \label{eqnMXZ1}
\\&=\sum_{i=1}^{n}I_P(M X^{i-1}Y^{i-1} Z_{\sim i} ; X_i | Z_i) \label{eqnMXZ2}
\\&\geq \sum_{i=1}^{n}I_P(M Y^{i-1} Z_{i+1}^n ; X_i | Z_i)\nonumber
\\&=\sum_{i=1}^{n}I_P(U_i ; X_i | Z_i) \nonumber
\\&=n\cdot I_P(U ; X | Z) \nonumber
 \end{align}
where \eqref{eqnMXZ1} follows from the i.i.d. assumption, and \eqref{eqnMXZ2} follows from \(I(Y^{i-1}; X_i | Z^n X^{i-1} M) = 0\) as
 \begin{align*}
 I(Y^{i-1}; X_i | Z^n X^{i-1} M) &\leq I(Y^{i-1}; X_iM | Z^n X^{i-1}) \\
 &\leq I(Y^{i-1} ; X^n | Z^n X^{i-1}) \\
 &= I(Y^{i-1}; X^i | Z^n ) = 0.
 \end{align*} 
The proof for $
    R\geq I_P(U ; X | Y)$ 
is similar.
This completes the proof.
\end{proof}

The following theorem yields an explicit upper bound in a special case, which is useful in comparing our bound with the previously known bounds:
\begin{theorem}\label{theorem_wagner}
    Take an arbitrary pair \((P_{Z|XY}, Q_{Z|XY})\) such that \(
    Q_{YZ|X}= Q_{Z|X} Q_{Y|Z}
    \). Then for any $P_{UX}$ we have
    \begin{align}
    \sum_u P(u) f( P_{X|U}(\cdot | u)) \leq  D(P_{YZU} \| P_{UZ} Q_{Y|Z}). \label{rhsthm5}
    \end{align}
\end{theorem}

\begin{proof}
Given some $\hat{Q}_X$, let $$\hat{Q}_{XYZ}=\hat Q_X Q_{YZ|X}=\hat Q_XQ_{Z|X}Q_{Y|Z}
.$$ Then, we have
\begin{align}
&\sum_u P(u)\max_{\hat{Q}_X} [D(P_{Y|u} \| \hat{Q}_Y) - D(P_{Z|u} \| \hat{Q}_Z)]\nonumber \\&\leq \sum_u P(u)\max_{\hat{Q}_X} [D(P_{YZ|u} \| \hat{Q}_{YZ}) - D(P_{Z|u} \| \hat{Q}_Z)]\label{eqnTL1}
\\&=\sum_u P(u)\max_{\hat{Q}_X} \sum_z P(z|u) D(P_{Y|z,u} \| \hat{Q}_{Y|z})\nonumber
\\&=\sum_u P(u)\max_{\hat{Q}_X} \sum_z P(z|u) D(P_{Y|z,u} \| Q_{Y|z})\label{eqnTL2}
\\&=\sum_u P(u)\sum_z P(z|u) D(P_{Y|z,u} \| Q_{Y|z})\nonumber
\\&=D(P_{YZU} \| P_{UZ} Q_{Y|Z}) \nonumber
\end{align}
where \eqref{eqnTL1}
follows from the data processing inequality, and 
\eqref{eqnTL2} follows from the definition of \(\hat{Q}_{XYZ}\).  
\end{proof}

\begin{remark}
    {Please note that, unlike the left-hand side, the right-hand side in \eqref{rhsthm5} does not depend on $Q_{XZ}$, so one may assume $Q_{XZ}=P_{XZ}$ when using Theorem \ref{theorem_wagner}.}
\end{remark}
\section{Comparison with previous bounds}
Next, we show that the bound in Theorem \ref{theorem_1} improves over the bounds given in \cite{Yuval2024} and \cite{RahmanWagner2011}.

\begin{lemma}
    \label{lemfG}

Given any \((P_{Z|XY}, Q_{Z|XY}) \in \mathcal{R}(P_{XY}, Q_{XY})\) as defined in Definition \ref{definition_1} and for every $P_{X|U}$, we have
\begin{align}
    \sum_u P(u)f(P_{X|U}(\cdot | u)) \leq I_P(Y; U | Z) + D(P_{YZ} \| Q_{YZ}). \label{eqn_wagner_compare}
\end{align}
Consequently,
\begin{align*}
    &G_{P_{XY},Q_{XY},P_{XZ},Q_{XZ}}(R) 
    \leq \max_{P_{U|X}\in\mathcal{T}(P_{XY},P_{XZ})} I_P(Y; U | Z) + D(P_{YZ} \| Q_{YZ}).
\end{align*}
\end{lemma}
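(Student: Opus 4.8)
The plan is to derive the inequality \eqref{eqn_wagner_compare} as an immediate consequence of Theorem~\ref{theorem_wagner} together with a routine chain-rule expansion of a relative entropy, and then to pass to the maximum over $\mathcal T(P_{XY},P_{XZ})$.

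First I would note that membership $(P_{Z|XY},Q_{Z|XY})\in\mathcal R(P_{XY},Q_{XY})$ supplies, through \eqref{eqnDef2an}, precisely the factorization $Q_{YZ|X}=Q_{Z|X}Q_{Y|Z}$ that serves as the hypothesis of Theorem~\ref{theorem_wagner}. Applying that theorem with $P_{UX}=P_{U|X}P_X$ then gives, for every $P_{X|U}$,
\[
\sum_u P(u)\,f(P_{X|U}(\cdot\mid u))\ \le\ D\!\left(P_{YZU}\,\|\,P_{UZ}Q_{Y|Z}\right).
\]
If the right-hand side is $+\infty$ there is nothing to prove, so one may assume it is finite, which supplies the absolute continuity needed for the manipulations below.

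Next I would expand the right-hand side by inserting $P_{Y|Z}$ in the denominator:
\begin{align*}
D\!\left(P_{YZU}\,\|\,P_{UZ}Q_{Y|Z}\right)
&=\E_{P_{YZU}}\!\left[\log\frac{P_{Y|UZ}}{P_{Y|Z}}\right]+\E_{P_{YZ}}\!\left[\log\frac{P_{Y|Z}}{Q_{Y|Z}}\right]\\
&=I_P(Y;U\mid Z)+D\!\left(P_{Y|Z}\,\|\,Q_{Y|Z}\mid P_Z\right)\\
&\le I_P(Y;U\mid Z)+D(P_{YZ}\,\|\,Q_{YZ}),
\end{align*}
where the last line uses the chain rule $D(P_{YZ}\|Q_{YZ})=D(P_Z\|Q_Z)+D(P_{Y|Z}\|Q_{Y|Z}\mid P_Z)$ and the nonnegativity of $D(P_Z\|Q_Z)$. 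Here condition \eqref{eqnDef2bn} (i.e.\ $P_{XZ}=Q_{XZ}$) even makes the last step an equality, since then $P_Z=Q_Z$; but only $D(P_Z\|Q_Z)\ge 0$ is actually needed. Chaining the two displays gives \eqref{eqn_wagner_compare}.

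Finally, \eqref{eqn_wagner_compare} holds for every $P_{X|U}$, hence in particular for every $P_{U|X}\in\mathcal T(P_{XY},P_{XZ})$. Since $D(P_{YZ}\|Q_{YZ})$ does not depend on $U$, taking the maximum over this set on both sides of \eqref{eqn_wagner_compare} and recalling the definition \eqref{eqnDefCf} of $G$ yields the stated bound on $G_{P_{XY},Q_{XY},P_{XZ},Q_{XZ}}(R)$. I do not expect a genuine obstacle here: the only points requiring attention are bookkeeping of which defining property of $\mathcal R$ is used where (\eqref{eqnDef2an} is essential, for invoking Theorem~\ref{theorem_wagner}, while \eqref{eqnDef2bn} is only incidental), and treating the degenerate $+\infty$ case of the relative-entropy chain rule separately, which is immediate.
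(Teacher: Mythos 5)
Your proof is correct and pivots on the same key ingredient as the paper's: both apply Theorem~\ref{theorem_wagner} to reduce the claim to bounding $D(P_{YZU} \| P_{UZ} Q_{Y|Z})$, both finish with a chain-rule expansion of that divergence, and the passage to the maximum over $\mathcal{T}(P_{XY},P_{XZ})$ is identical. The only difference is in how that divergence is handled. The paper attaches $U$ to $Q$ through the same channel $P_{U|X}$, uses \eqref{eqnDef2bn} to get $P_{UZ}=Q_{UZ}$ and the Markov condition \eqref{eqnDef2an} to get $Q_{U|YZ}=Q_{U|Z}=P_{U|Z}$, and thereby obtains the exact identity $D(P_{YZU} \| P_{UZ} Q_{Y|Z}) = I_P(Y;U|Z) + D(P_{YZ}\|Q_{YZ})$. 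You instead insert $P_{Y|Z}$ and bound $D(P_{Y|Z}\|Q_{Y|Z}\mid P_Z) \leq D(P_{YZ}\|Q_{YZ})$ by discarding $D(P_Z\|Q_Z)\geq 0$. Your route is marginally more general: it establishes \eqref{eqn_wagner_compare} using only \eqref{eqnDef2an}, with \eqref{eqnDef2bn} needed only to observe that your last inequality is in fact tight (and, later, so that Corollary~\ref{corr-weaker} can use $E_{P_{XZ},Q_{XZ}}(R)=0$); the paper's route requires full membership in $\mathcal{R}(P_{XY},Q_{XY})$ but yields an equality rather than an inequality at that step. Both arguments are sound, and your handling of the $+\infty$ case and of the split of the expectation (finite $I_P(Y;U|Z)$ plus a nonnegative conditional divergence) is unobjectionable in the discrete setting of the paper.
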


\begin{proof}
From the definition of \( \mathcal{R}(P_{XY}, Q_{XY})\) we obtain 
\begin{align*}
    P_Z = Q_Z.
\end{align*}
Then immediately,
\begin{align*}
    D(P_{YZU} \| P_{UZ} Q_{Y|Z}) 
    &= \sum_{u,y,z} P_{YZU} \log \Big ( \frac{P_{YZU} Q_Z}{P_{UZ} Q_{YZ}} \Big) 
    \\&= \sum_{u,y,z} P_{YZU} \log \Big ( \frac{P_{YU|Z}}{P_{U|Z}P_{Y|Z}} \cdot \frac{P_{YZ}}{Q_{YZ}} \Big )
    \\& = I_P(Y;U |Z) + D(P_{YZ} \| Q_{YZ}).
\end{align*}

Applying Theorem \ref{theorem_wagner} completes the proof.

\end{proof}
\begin{corollary}\label{corr-weaker}
    Given that \((P_{Z|XY}, Q_{Z|XY}) \in \mathcal{R}(P_{XY}, Q_{XY})\), we have
$$E_{P_{XY},Q_{XY}}(R) \leq \max_{P_{U|X}\in\mathcal{T}(P_{XY},P_{XZ})} I_P(Y; U | Z) + D(P_{YZ} \| Q_{YZ}). $$
This corollary follows from $E_{P_{XZ},Q_{XZ}}(R)=0$ by the assumption $P_{XZ}=Q_{XZ}$ from $\mathcal{R}(P_{XY}, Q_{XY})$.
\end{corollary}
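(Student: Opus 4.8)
The plan is to combine the two ingredients already established in the excerpt: the general decomposition of Theorem~\ref{theorem_1} and the estimate of Lemma~\ref{lemfG}. First I would note that membership $(P_{Z|XY},Q_{Z|XY})\in\mathcal{R}(P_{XY},Q_{XY})$ forces $P_{XZ}=Q_{XZ}$ by the defining relation \eqref{eqnDef2bn}. In particular $D(P_{XZ}\|Q_{XZ})=0<\infty$, so the part of the hypothesis of Theorem~\ref{theorem_1} concerning the pair $(P_{XZ},Q_{XZ})$ is automatically satisfied; the remaining requirement $D(P_{XY}\|Q_{XY})<\infty$ I would carry over as a standing assumption (the same one under which Theorem~\ref{theorem_1} is proved), since when it fails the claimed inequality is typically vacuous.

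Next I would establish the identity $E_{P_{XZ},Q_{XZ}}(R)=0$. By the multi-letter characterization of Theorem~\ref{thm14Ahlswede} applied to the source pair $(P_{XZ},Q_{XZ})$, it suffices to show $D(P_{MZ^n}\|Q_{MZ^n})=0$ for every blocklength $n$ and every encoder $M=f_n(X^n)$. But $P_{XZ}=Q_{XZ}$ yields $P_{X^nZ^n}=Q_{X^nZ^n}$ by the i.i.d.\ product structure, and the encoder acts identically under both hypotheses, i.e.\ $P_{M|X^n}=Q_{M|X^n}$; hence $P_{MZ^n}=Q_{MZ^n}$ and the divergence vanishes. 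Taking the supremum over $n$ and over encoders gives $E_{P_{XZ},Q_{XZ}}(R)=0$.

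Finally I would assemble the pieces. Theorem~\ref{theorem_1} gives $E_{P_{XY},Q_{XY}}(R)\le G_{P_{XY},Q_{XY},P_{XZ},Q_{XZ}}(R)+E_{P_{XZ},Q_{XZ}}(R)$; the second term is $0$ by the previous step; and Lemma~\ref{lemfG} (which applies precisely because $(P_{Z|XY},Q_{Z|XY})\in\mathcal{R}(P_{XY},Q_{XY})$) bounds the first term by $\max_{P_{U|X}\in\mathcal{T}(P_{XY},P_{XZ})} I_P(Y;U|Z)+D(P_{YZ}\|Q_{YZ})$. Chaining these inequalities produces the corollary.

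I do not expect any genuine obstacle: the statement is a bookkeeping consequence of Theorem~\ref{theorem_1} and Lemma~\ref{lemfG}. The only points deserving a moment's care are the (elementary) verification that $E_{P_{XZ},Q_{XZ}}(R)=0$ from $P_{XZ}=Q_{XZ}$ together with the shared-encoder property $P_{M|X^n}=Q_{M|X^n}$, and the bookkeeping of the finiteness hypotheses inherited from Theorem~\ref{theorem_1}; both are routine.
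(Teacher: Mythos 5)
Your proposal is correct and matches the paper's (very brief) argument exactly: both deduce $P_{XZ}=Q_{XZ}$ from the definition of $\mathcal{R}(P_{XY},Q_{XY})$, observe that this forces $E_{P_{XZ},Q_{XZ}}(R)=0$, and then chain Theorem~\ref{theorem_1} with Lemma~\ref{lemfG}. The extra detail you supply (verifying $E_{P_{XZ},Q_{XZ}}(R)=0$ via Theorem~\ref{thm14Ahlswede} and checking the finiteness hypothesis) is routine and consistent with what the paper leaves implicit.
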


Note that the bound in Corollary \ref{corr-weaker} is a weaker version of the auxiliary receiver bound. The bound in Corollary \ref{corr-weaker} is less than or equal to the bound in \cite{RahmanWagner2011} since $\mathcal{T}(P_{XY},P_{XZ})\subset \mathcal{S}_{P_{XYZ}}$ as $\mathcal{S}_{P_{XYZ}}$ is the set of $P_{U|X}$ satisfying
$$R\geq I_P(X;U|Z)$$
while $P_{U|X}\in\mathcal{T}(P_{XY},P_{XZ})$ satisfies 
$$R\geq \max(I_P(X;U|Y),I_P(X;U|Z)).$$
Next, as a sanity check, we consider the example of "test against conditional independence" from \cite{RahmanWagner2011}:
\begin{example}[Test against conditional independence]
Assume that $Y=(J,Z,\hat Y)$ for some random variables $J$, $Z$ and $\hat Y$. We denote the distribution of $J,Z,\hat Y$ under distribtion $P$ and $Q$ by $P_{J,Z,\hat Y}$ and $Q_{J,Z,\hat Y}$. Moreover, assume that $P_{XY}=P_{XJZ\hat Y}$ and $Q_{XY}=Q_{XJZ\hat Y}$  satisfy the following constraints:    
    $I_{Q}(\hat Y;XJ|Z)=0$, $P_{XJZ}=Q_{XJZ}$ and $P_{\hat YZ}=Q_{\hat YZ}$. Then, the upper bound in Corollary \ref{corr-weaker} implies the following bound:
    \begin{align*}
 & E_{P_{XY},Q_{XY}}(R)
 \leq \max_{P_{U|X}}
I_P(\hat Y; UJ | Z)
\end{align*}
where the maximum is over $P_{U|X}$ satisfying $I_P(U;X|ZJ)\leq R$.\label{corrJ}
\end{example}

\begin{proof}[Proof of Example \ref{corrJ}] Let $Z'=(Z,J)$. Note that \eqref{eqnDef2an} and \eqref{eqnDef2bn} hold for $(X,Y,Z')=(X,Y,ZJ)$, i.e.,
\begin{align}
      Q_{YZJ|X} &= Q_{ZJ|X} Q_{Y|ZJ}, \label{eqnDef2anN} \\
     P_{XZJ} &= Q_{XZJ}, \label{eqnDef2bnN} 
\end{align} 
due to our assumptions. Then, \((P_{Z'|XY}, Q_{Z'|XY}) \in \mathcal{R}(P_{XY}, Q_{XY})\) and we obtain
\begin{align*}
&E_{P_{XY},Q_{XY}}(R) 
\\&\leq \max_{P_{U|X}\in\mathcal{T}(P_{XY},P_{XZ'})} I_P(Y; U | Z') + D(P_{YZ'} \| Q_{YZ'})
\\&=\max_{P_{U|X}} I_P(Y; U | ZJ) + D(P_{YZJ} \| Q_{YZJ})
    \\&=\max_{P_{U|X}} I_P(Y; U | ZJ) + D(P_{Y} \| Q_{Y})
\end{align*}
where the maximum is over $P_{U|X}$ satisfying $I_P(U;X|YJ)\leq R$ and $I_P(U;X|ZJ)\leq R$.
Observe that
$I_P(U;X|YJ)\leq I_P(U;X|ZJ)$ because $Z$ is a function of $Y$. Thus, $I_P(U;X|YJ)\leq R$ is redundant. Further, we have
\begin{align*}
    &I_P(Y; U | ZJ) + D(P_{Y} \| Q_{Y})
    \\&=I_P(\hat Y; U | ZJ) 
    + D(P_{\hat YZJ} \| Q_{\hat YZJ})
    \\&=I_P(\hat Y; U | ZJ) 
    + D(P_{\hat YZJ} \| P_{\hat YZ}Q_{J|\hat YZ})
    \\&=I_P(\hat Y; U | ZJ) 
    + D(P_{\hat YZJ} \| P_{\hat YZ}Q_{J|Z})
    \\&=I_P(\hat Y; U | ZJ) 
    + D(P_{\hat YZJ} \| P_{\hat YZ}P_{J|Z})
     \\&=I_P(\hat Y; U | ZJ) + I_P(\hat Y;J|Z)
          \\&=I_P(\hat Y; UJ | Z).
\end{align*}
\end{proof}

\subsection{The minimum communication rate needed to achieve the centralized bound }
The centralized bound states that
\begin{align}
E_{P_{XY},Q_{XY}}(R)\leq D(P_{XY}\|Q_{XY}). \label{centralized_bound2}
\end{align}
The centralized bound is tight for large $R$. But exactly how large does $R$ need to be? Let $R^*$ be the smallest rate such that
\begin{align}
E_{P_{XY},Q_{XY}}(R)= D(P_{XY}\|Q_{XY}), \qquad \forall R\geq R^*. 
\end{align}
It is shown in  \cite{Shimokawa1994} that  $R^*\leq D(P_{XY}\|Q_{XY})+H_P(X|Y)$. In this section, we use our new bound to show that if $P_{XY}$ and $Q_{XY}$ are non-degenerate, we have $R^*\geq H_P(X|Y)$. 

\begin{definition}
    Let $\mathcal V$ be the class of $(P_{Y|X}, Q_{Y|X})$ such that 
    \begin{itemize}
        \item $P_{Y|X}(y | x)>0$ and $Q_{Y|X}(y | x)>0$ for all $x \in \mathcal X,y \in \mathcal Y$.
        \item  One \textbf{cannot} find $x_0, x_1 \in \mathcal X, x_0 \neq x_1$ such that 
    $$
    \frac{P_{Y|X}(y_0 | x_0) P_{Y|X}(y_1 | x_1)}{P_{Y|X}(y_0|x_1)P_{Y|X}(y_1|x_0)} = \frac{Q_{Y|X}(y_0 | x_0) Q_{Y|X}(y_1 | x_1)}{Q_{Y|X}(y_0|x_1)Q_{Y|X}(y_1|x_0)},\qquad\qquad \forall y_0\neq y_1 \in \mathcal Y.
    $$
    \end{itemize}
   \label{defin4}
\end{definition}
Observe that unless $P_{Y|X}$ and $Q_{Y|X}$ are structured, they will belong to $\mathcal{V}$, i.e., if $P_{Y|X}$ and $Q_{Y|X}$ are uniformly chosen at random from the class of all channels, with probability one, the pair will belong to $\mathcal{V}$. 

\begin{theorem}
   For any $(P_{Y|X}, Q_{Y|X}) \in \mathcal V$, $R^*\geq H_P(X|Y)$.\label{thm-general-position} 
\end{theorem}

\begin{proof}
 Take an arbitrary $R < H_P(X|Y)$. It suffices to show that the upper bound in Theorem \ref{main_Thm} is strictly smaller than the centralized bound.
    Assume that $Z = X$ under both $P$ and $Q$. Then,
    $E_{P_{XZ},Q_{XZ}}(R)=D(P_X\|Q_X)$ and the bound in Theorem \ref{main_Thm} reduces to the following:
\begin{align}
    E_{P_{XY},Q_{XY}}(R)\leq D(P_X\|Q_X)+\max_{P_{U|X}:I_P(U;X|Y)\leq R}\sum_{u}P_U(u) \max_{\hat Q_X} \left \{ D[P_{Y|u} \| \hat Q_Y] - D[P_{X|u} \| \hat Q_X] \right \}. \label{rhs19eq}
    \end{align}
    The inner maximizer over $\hat{Q}_X$ depends on $u$. Let us denote the maximizer by $\hat{Q}_{X|u}$ for $u\in\mathcal{U}$. Let
    $\hat Q_{XY|u} = \hat Q_{X|u}Q_{Y|X}$.
    Observe that
  \begin{align}  
    &\sum_{u }P_U(u)  \left \{ D[P_{Y|u} \| \hat Q_{Y|u}] - D[P_{X|u} \| \hat Q_{X|u}] \right \} \nonumber
    \\& \leq \sum_{u}P_U(u)  \left \{ D[P_{XY|u} \| \hat Q_{XY|u}] - D[P_{X|u} \| \hat Q_{X|u}] \right \}  \label{less_cen_eq_cond} \\
    & = \sum_{u}P_U(u)  \sum_{x} P_{X|U}(x|u) D(P_{Y|Xu} \| \hat Q_{Y|Xu}) 
    \\
    & = \sum_{u}P_U(u)  \sum_{x} P_{X|U}(x|u) D(P_{Y|X} \| Q_{Y|X}) 
    \\
    & = D(P_{XY} \| Q_{XY})-D(P_X\|Q_X).
\end{align}
This shows that the upper bound in \eqref{rhs19eq} is always less than or equal to the centralized bound. 
Observe that the inequality in \eqref{less_cen_eq_cond} holds with equality only when
$$
\sum_{u, y }p(u,y) D(P_{X|Y,u} \| \hat Q_{X|Y,u}) = 0.
$$
That is, for any $u,y$ where $p(u,y) > 0$, we must have 
\begin{align}
    P_{X|Y,u} =\hat Q_{X|Y,u} \label{less_con_eq_2}.
\end{align}
Since $I_P(U;X|Y)\leq R < H_P(X|Y)$, we obtain $H_P(X|Y,U)>0$. Thus, $H_P(X|U)\geq H_P(X|Y,U)>0$.
Therefore, one can find some $u^*\in\mathcal{U}$, $x_0, x_1 \in \mathcal X, x_0 \neq x_1$ such that $P_U(u^*)>0$, $P_{X|U}(x_0|u^*) > 0, P_{X|U}(x_1|u^*) > 0$. Since $P_{Y|X}(y|x)>0$ and $Q_{Y|X}(y|x)>0$ for all $x,y$, equation \eqref{less_con_eq_2} implies
\begin{align}
    \frac{P_{X|Y,U}(x_0|y,u^*)}{P_{X|Y,U}(x_1|y,u^*)} = \frac{\hat Q_{X|Y,U}(x_0|y,u^*)}{\hat Q_{X|Y,U}(x_1|y,u^*)}, \quad \forall y.
\end{align}
Using the Markov chain $U \rightarrow X \rightarrow Y$, under both $P$ and $\hat Q$, we deduce that for all $y$ we have
\begin{align}
    \frac{P_{X|U}(x_0|u^*)P_{Y|X}(y|x_0)}{P_{X|U}(x_1|u^*)P_{Y|X}(y|x_1)} &= \frac{\hat Q_{X|U}(x_0|u^*)\hat Q_{Y|X}(y|x_0)}{\hat Q_{X|U}(x_1|u^*)\hat Q_{Y|X}(y|x_1)}\\
     &= \frac{\hat Q_{X|U}(x_0|u^*) Q_{Y|X}(y|x_0)}{\hat Q_{X|U}(x_1|u^*) Q_{Y|X}(y|x_1)}.
\end{align}
If we write the above equation for $y_0$ and $y_1$ and take their ratios, we deduce that
$$
    \frac{P_{Y|X}(y_0 | x_0) P_{Y|X}(y_1 | x_1)}{P_{Y|X}(y_0|x_1)P_{Y|X}(y_1|x_0)} = \frac{Q_{Y|X}(y_0 | x_0) Q_{Y|X}(y_1 | x_1)}{Q_{Y|X}(y_0|x_1)Q_{Y|X}(y_1|x_0)},\qquad\qquad \forall y_0\neq y_1 \in \mathcal Y.
    $$
But this contradicts  $(P_{Y|X}, Q_{Y|X}) \in \mathcal V$. This completes the proof.
\end{proof}

\begin{remark}
The conditional entropy $H_P(X|Y)$     is just a convenient bound obtained using Theorem \ref{main_Thm} with the choice of $Z=X$ under both hypotheses. In some cases, it is possible to obtain better bounds than $R^*\geq H_P(X|Y)$. See the discussion of Figure \ref{fig_against_rate} in the next section. 
\end{remark}

\subsection{Strict improvement for binary sources}
This section\footnote{MATLAB simulation codes for this and the following section can be found at \href{https://github.com/zhenduowen/DistributedHypothesisTest\_Simulations}{github.com/zhenduowen/DistributedHypothesisTest\_Simulations.} } compares our bound in Theorem \ref{theorem_1} with Rahman-Wagner's upper bound \cite{RahmanWagner2011} for doubly symmetric binary sources (DSBS). In this section for DSBS, we measure the communication rate by bits per symbol, the logarithms are in base $2$, and the exponent is also considered in base $2$.

Consider $X$ and $Y$ to be a DSBS, where under $H_0$, $(X,Y)$ is distributed as follows:

$$P_{XY}: \quad 
\begin{array}{c|cc}
X/ Y & 0 & 1 \\ \hline
0 & \frac12 (1 - \kappa_0) & \frac12 \kappa_0\\
1 & \frac12 \kappa_0 & \frac12 (1-\kappa_0)
\end{array}
$$
and under $H_1$, $(X,Y)$ is distributed as follows:
$$Q_{XY}: \quad 
\begin{array}{c|cc}
X/ Y & 0 & 1 \\ \hline
0 & \frac12 (1 - \kappa_1) & \frac12 \kappa_1\\
1 & \frac12 \kappa_1 & \frac12 (1-\kappa_1)
\end{array}
$$
for some $\kappa_0,\kappa_1\in[0,1]$. 

Note that $P(X=1) = Q(X=1) = \frac12$, and the channels from $X$ to $Y$ are binary symmetric channels (BSC): $P_{Y|X} \sim BSC(\kappa_0)$, $Q_{Y|X} \sim BSC(\kappa_1)$. 

Let
$$\rho_i=1-2\kappa_i, \qquad i=0,1.$$
Then, $\rho_0,\rho_1\in[-1,1]$ and the Pearson correlation coefficients between $X$ and $Y$ under hypothesis $H_0$ and $H_1$ are $\rho_0$ and $\rho_1$ respectively. Since the transformation $Y'=1-Y$ converts a problem with $(\rho_0,\rho_1)$ into one with $(-\rho_0,-\rho_1)$, without loss of generality, we take $\rho_0 \in [-1,1]$ and $\rho_1 \in [0,1]$. Therefore, $\kappa_1\in[0,\frac12]$.

The centralized upper bound for these sources evaluates to
$$E_{P_{XY}, Q_{XY}}(R) \leq  (1 - \kappa_0) \log_2 \left( \frac{1 - \kappa_0}{1 - \kappa_1} \right) + \kappa_0 \log_2 \left( \frac{\kappa_0}{\kappa_1} \right).$$
The centralized bound is tight \cite{Shimokawa1994} when $$R\geq D(P_{XY}\|Q_{XY})+H_P(X|Y)=(1 - \kappa_0) \log_2 \left( \frac{1}{1 - \kappa_1} \right) + \kappa_0 \log_2 \left( \frac{1}{\kappa_1} \right). $$
Moreover, Theorem \ref{thm-general-position}  shows that the centralized bound is not tight when
$$R<H_P(X|Y)=(1 - \kappa_0) \log_2 \left( \frac{1}{1 - \kappa_0} \right) + \kappa_0 \log_2 \left( \frac{1}{\kappa_0} \right).$$
Note that the pair $(P_{Y|X},Q_{Y|X})$ belongs to $\mathcal{V}$ (defined in Definition \ref{defin4}) for every $\kappa_0, \kappa_1\in(0,1)$ where $\kappa_0\neq \kappa_1$ since 
\begin{align*}
    \frac{P_{Y|X}(0 | 0) P_{Y|X}(1 | 1)}{P_{Y|X}(0|1)P_{Y|X}(1|0)} = \left(\frac{1-\kappa_0}{\kappa_0}\right)^2,\quad  \frac{Q_{Y|X}(0 | 0) Q_{Y|X}(1 | 1)}{Q_{Y|X}(0|1)Q_{Y|X}(1|0)}= \left(\frac{1-\kappa_1}{\kappa_1}\right)^2.
\end{align*}
    
\subsubsection{Rahman and Wagner's Bound}
For evaluation of Rahman and Wagner's Bound, we
need to choose 
 \((P_{Z|XY}, Q_{Z|XY})\in  \mathcal{R}(P_{XY}, Q_{XY})\) satisfying
\begin{align}
      Q_{YZ|X} &= Q_{Z|X} Q_{Y|Z}, \label{eqnDef2anD2} \\
     P_{XZ} &= Q_{XZ}.\label{eqnDef2bnD2} 
\end{align} 
To simplify our evaluation, we assume that the channel $Q_{Z|X}$ is a BSC with some parameter $\kappa'_1$. We further assume that $\kappa'_1\leq \kappa_1\leq \frac12$. This ensures that under the hypothesis $Q$,  the joint distribution $Q_{XYZ}$ of the form $Q_{XYZ} = Q_XQ_{Z|X}Q_{Y|Z}$ can be defined where the channel $Q_{Y|Z}$ would be also a BSC with cross-over probability $\frac{\kappa_1 - \kappa_1'}{1 - 2\kappa_1'}$. From \eqref{eqnDef2bnD2}, we need to assume that
$P_{Z|X}=Q_{Z|X}$ 
is also a BSC with parameter $\kappa'_1$. To define $P_{YZ|X}$, we need to couple variables $Y$ and $Z$ given $X$, consistent with both $P_{Y|X}$ and $P_{Z|X}$. 
We consider the following symmetric coupling:
$$
P_{YZ|X=0} : \quad 
\begin{array}{c|cc}
Y/Z & 0 & 1 \\ \hline
0 & 1 - \kappa_1' - \kappa_0 + \alpha_{11} & \kappa_1' - \alpha_{11} \\ 
1 & \kappa_0 - \alpha_{11} & \alpha_{11}
\end{array}
$$
$$
P_{YZ|X=1} : \quad
\begin{array}{c|cc}
Y/ Z & 0 & 1 \\ \hline
0 &  \alpha_{11} & \kappa_0 - \alpha_{11} \\
1 & \kappa_1' - \alpha_{11} & 1 - \kappa_1' - \kappa_0 +\alpha_{11}
\end{array}
$$
for some ${\alpha_{11} \in [\max(0,\kappa_0+\kappa'_1-1),\min(\kappa_0, \kappa_1')]}$. 
\begin{theorem}[Evaluation of Theorem \ref{RW_thm} from \cite{RahmanWagner2011}]
    For any $\rho_0 \in [-1,1]$ and $\rho_1 \in [0,1]$ and for extensions $P_{Z|XY}$ and $Q_{Z|XY}$ described above, Rahman and Wagner's bound in Theorem \ref{RW_thm} evaluates to:
\begin{align}
    & \min_{\alpha_{11} \in [0,\min(\kappa_0, \kappa_1')], \kappa_1' \in [0, \kappa_1]} \nonumber
    \\&  \left \{D(P_{YZ} \| Q_{YZ}) + H_P(Y|Z) + \min_{\lambda \in [0,1]} \Big \{   \lambda R - \lambda H_P(X|Z) + \max_{a \in [0,1]}[\lambda H_r(X|Z) - H_r(Y|Z)] \Big \} \right \}
\end{align}
where the distribution
$r_{XYZ}$ is defined as follows: $r(X=1) \in [0,1]$ and $r_{XYZ} = r_X P_{YZ|X}$. 

\label{thm_discrete_RW}
\end{theorem}

\begin{proof}
Using the alternative characterization of the Rahman-Wagner upper bound in Lemma \ref{lemmaaltern3} given in Appendix \ref{appendixlemmaaltern3}, we need to minimize over $(P_{Z|XY}, Q_{Z|XY})$ of   
\begin{align}
    D(P_{YZ} \| Q_{YZ}) + H_P(Y|Z) + \min_{\lambda \in [0,1]} \left \{   \lambda R - \lambda H_P(X|Z) + \max_{P_{U|X}}  [ \lambda H_P(X|U,Z) -H_P(Y|U,Z)] \right \}.
\end{align}

The inner maximization over $P_{U|X}$ evaluates the concave envelope of $r(x)\mapsto \lambda H_r(X|Z) - H_r(Y|Z)$ at $p(X=1) = \frac12$. Due to the symmetric structure of the channels $P_{YZ|X=x}$ for $x=0,1$, we can utilize the symmetrization argument of \cite{nair2013}: the concave envelope attains its maximum at $P(X=1) = \frac12$, and the value of the concave envelope
is equal to the maximum of the function  $r(x)\mapsto \lambda H_r(X|Z) - H_r(Y|Z)$. 
\end{proof}

\subsubsection{New Bound} To evaluate the bound in Theorem \ref{main_Thm}, we take the channels $P_{Z|X}$ and $Q_{Z|X}$ to be BSC with parameters $\kappa'_0$ and $\kappa'_1$ respectively. Consider the constraint on $P_{U|X}$: \begin{align}
    R\geq \max(I_P(U;X|Z), I_P(U;X|Y))\label{constonRs}.
\end{align}
Note that
$I(U;X|Z)=I(U;X)-I(U;Z)$ and $I(U;X|Y)=I(U;X)-I(U;Y)$. 
Since both  $P_{Y|X}$ and $P_{Z|X}$ are BSC with parameters $\kappa_0$ and $\kappa'_0$ respectively, depending on the channel parameters, one is always less noisy than the other. More specifically, if $$\left|\kappa'_0-\frac12\right|\geq \left|\kappa_0-\frac12\right|$$
we will have, $I(U;Z)\geq I(U;Y)$ for all $p_{U|X}$. Hence,
$$I(U;X|Z)\leq I(U;X|Y), \qquad\forall P_{U|X}.$$
We call this  \emph{case (a)}, and show the channel output by $Z_a$ and the channel parameter by $\kappa'_{0a}$ in this case.
On the other hand, if
$$\left|\kappa'_0-\frac12\right|\leq \left|\kappa_0-\frac12\right|$$
we will have, $I(U;Z)\leq I(U;Y)$ for all $p_{U|X}$. Hence,
$$I(U;X|Z)\geq I(U;X|Y), \qquad\forall P_{U|X}.$$
We call this  \emph{case (b)}, and show the channel output by $Z_b$ and the channel parameter by $\kappa'_{0b}$ in this case.

\begin{theorem}[Evaluation of Theorem \ref{main_Thm}]
    For any $\rho_0 \in [-1,1]$ and $\rho_1 \in [0,1]$ in the DSBS setting, the bound in Theorem \ref{main_Thm} yields the following upper bound:
    $$
    E_{P_{XY}, Q_{XY}}(R) \leq \min \lbrace G_{P_{XY}, Q_{XY}, P_{XZ_a}, Q_{XZ_a}}(R)+ D(P_{XZ_a} \| Q_{XZ_a}),  G_{P_{XY}, Q_{XY}, P_{XZ_b}, Q_{XZ_b}}(R)+ D(P_{XZ_b} \| Q_{XZ_b})\rbrace 
    $$
    where
    $$
    G_{P_{XY}, Q_{XY}, P_{XZ_a}, Q_{XZ_a}}(R) = \min_{\substack{\kappa_{0a}'\in \mathcal A\\ \kappa_1' \in [0,1]\\ \lambda \geq 0 }} \left \{ \lambda R - \lambda H_P(X|Y) +\max_{\substack{r_X,\hat Q_X}}  \left \{ [D(r_Y \| \hat Q_{Y}) - D(r_{Z_a} \| \hat Q_{Z_a} ) ]  +\lambda H_r(X|Y)\right \}\right \},
    $$
    $$
    G_{P_{XY}, Q_{XY}, P_{XZ_b}, Q_{XZ_b}}(R) = \min_{\substack{\kappa_{0b}'\in \mathcal B\\ \kappa_1' \in [0,1]\\ \lambda \geq 0 }} \left \{ \lambda R - \lambda H_P(X|Z_b) +\max_{\substack{r_X,\hat Q_X}}  \left \{ [D(r_Y \| \hat Q_{Y}) - D(r_{Z_b} \| \hat Q_{Z_b} ) ]  +\lambda H_r(X|Z_b)\right \}\right \},
    $$
where 
$$\mathcal{A}=\left\{\kappa'_{0a}:\left|\kappa'_{0a}-\frac12\right|\geq \left|\kappa_0-\frac12\right|\right\}, \quad\mathcal{B}=\left\{\kappa'_{0b}:\left|\kappa'_{0b}-\frac12\right|\leq \left|\kappa_0-\frac12\right|\right\}$$
and the distributions are defined as follows: 
$P_{Z_a|X} \sim BSC(\kappa_{0a}')$, $Q_{Z_a|X} \sim BSC(\kappa_1'),\  \kappa_1' \in [0,1]$. 
$r(X=1) \in [0,1]$, $r_{XY} = r_X P_{Y|X}$, $r_{XZ_a} = r_XP_{Z_a|X}$; $\hat Q_{X,Y} = \hat Q_X Q_{Y|X}$, $\hat Q_{X,Z_a} = \hat Q_X Q_{Z_a|X}$. Similarly, 
$P_{Z_b|X} \sim BSC(\kappa_{0b}')$, $Q_{Z_b|X} \sim BSC(\kappa_1'),\  \kappa_1' \in [0,1]$. 
$r(X=1) \in [0,1]$, $r_{XY} = r_X P_{Y|X}$, $r_{XZ_b} = r_XP_{Z_b|X}$; $\hat Q_{X,Y} = \hat Q_X Q_{Y|X}$, $\hat Q_{X,Z_b} = \hat Q_X Q_{Z_b|X}$.
\label{thm_discrete_main}
\end{theorem}

\begin{proof}
In case (a), the constraint in \eqref{constonRs} can be written as
$$R\geq I(U;X|Y).$$
The difference term $G_{P_{XY}, Q_{XY}, P_{XZ_a}, Q_{XZ_a}}(R)$ can be expressed as follows:
\begin{align}
    &\min_{\substack{P_{Z_a|X}\\ Q_{Z_a|X}}} \max_{\substack{P_{U|X}\in\mathcal{T}(P_{XY},P_{XZ_a})}}  \left \{\sum_u p(u) \max_{\hat Q_X} [D(P_{Y|u} \| \hat Q_{Y}) - D(P_{Z_a|u} \| \hat Q_{Z_a} ) ] \right \}
    \\&=\min_{\substack{P_{Z_a|X}\\ Q_{Z_a|X}}} \max_{\substack{P_{U|X}}} \min_{\lambda \geq 0} \left \{\sum_u p(u) \max_{\hat Q_X} [D(P_{Y|u} \| \hat Q_{Y}) - D(P_{Z_a|u} \| \hat Q_{Z_a} ) ] + \lambda [R - I_P(U;X|Y)] \right \}    \\
    &\leq \min_{\substack{P_{Z_a|X}\\ Q_{Z_a|X}}} \min_{\lambda \geq 0}\max_{\substack{P_{U|X}}}  \left \{\sum_u p(u) \max_{\hat Q_X} [D(P_{Y|u} \| \hat Q_{Y}) - D(P_{Z_a|u} \| \hat Q_{Z_a} ) ] + \lambda [R - I_P(U;X|Y)] \right \} \label{eq_discrete_minimax}
    \\&=\min_{\substack{P_{Z_a|X}\\ Q_{Z_a|X}}} \min_{\lambda \geq 0}\left[\lambda R- \lambda H_P(X|Y)+\max_{\substack{P_{U|X}}}  \left \{\sum_u p(u) \max_{\hat Q_X} [D(P_{Y|u} \| \hat Q_{Y}) - D(P_{Z_a|u} \| \hat Q_{Z_a} ) ] + \lambda \sum_u p(u) H(X|Y,U=u)\right \}\right]
    \\&= \min_{\substack{P_{Z_a|X}\\ Q_{Z_a|X}}} \min_{\lambda \geq 0}\left[\lambda R- \lambda H_P(X|Y)+\max_{r_X}  \left \{ \max_{\hat Q_X} [D(r_{Y} \| \hat Q_{Y}) - D(r_{Z_a} \| \hat Q_{Z_a} ) ] + \lambda H_r(X|Y)\right \}\right].\label{eq_concave_sym}
\end{align}
Note that the inequality \eqref{eq_discrete_minimax} follows from the Max-Min inequality; equation \eqref{eq_concave_sym} follows because the inner maximization evaluates the concave envelope of $r_X\mapsto\max_{\hat Q_X} [D(r_Y \| \hat Q_{Y}) - D(r_Z \| \hat Q_{Z} ) ]  +\lambda H_r(X|Y)$ at the uniform distribution $p(X=1)=\frac{1}{2}$. Using the symmetrization argument in \cite{nair2013},  the concave envelope attains its maximum at $P(X=1) = \frac12$ and its value is given by the maximum value of the map $r_X\mapsto\max_{\hat Q_X} [D(r_Y \| \hat Q_{Y}) - D(r_Z \| \hat Q_{Z} ) ]  +\lambda H_r(X|Y)$.

The second case is similar. 
\begin{figure}[t]
    \centering
    \includegraphics[width=1\textwidth]{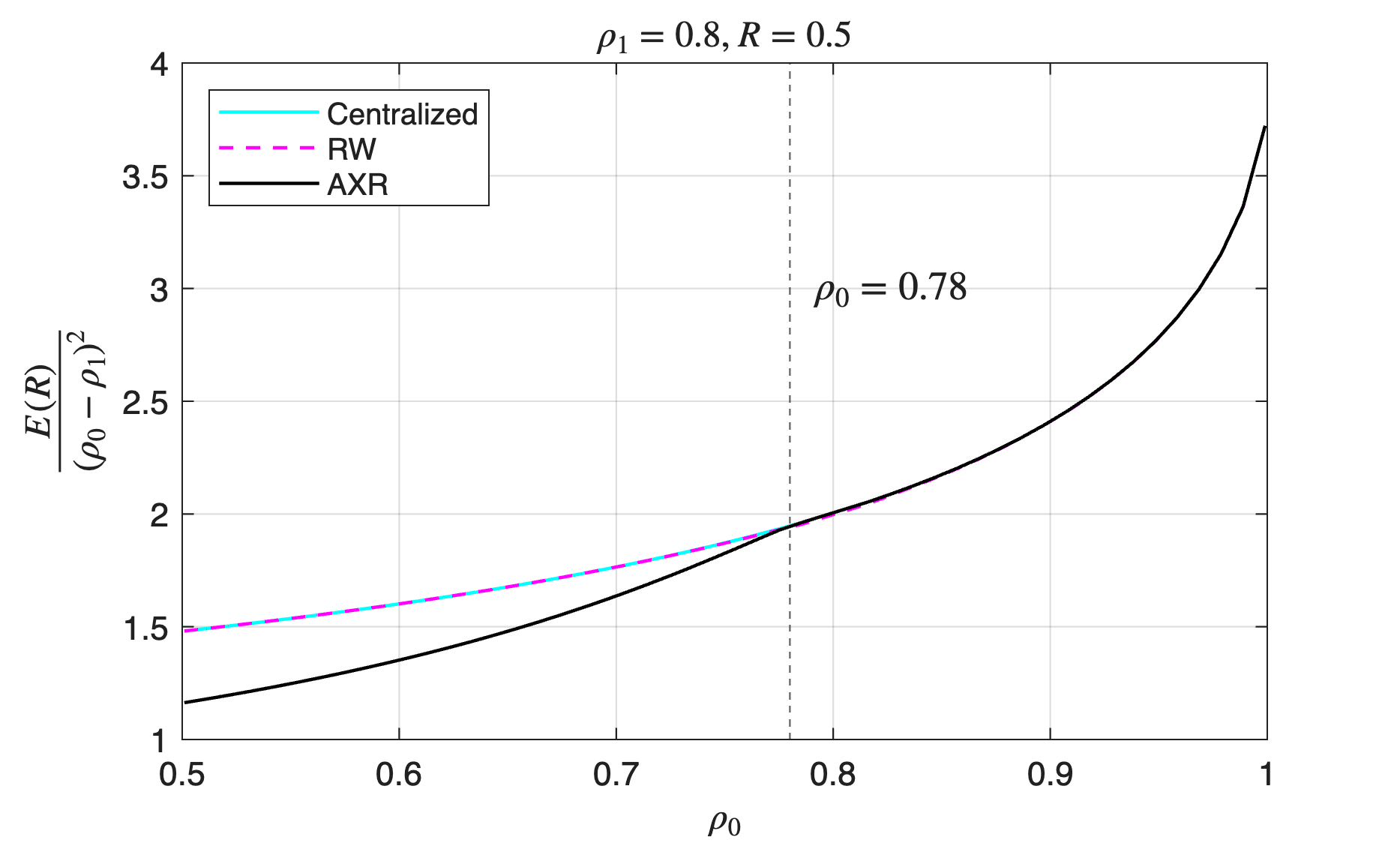}
    \caption{DSBS sources in the high rate regime: exponents (normalized by \((\rho_0 - \rho_1)^2\)) as a function of \(\rho_0\) for fixed \(\rho_1 = 0.8\) and rate \(R = 0.5\). The centralized bound is tight when $\rho_0$ exceeds $0.78...$ in this plot. The Rahman-Wagner upper bound matches the centralized bound for these parameters.}
    \label{fig_discrete_2}
\end{figure}

For the terms $E_{P_{XZ_a}, Q_{XZ_a}}(R)$ and $E_{P_{XZ_b}, Q_{XZ_b}}(R)$, we apply the centralized bound in \eqref{centralized_bound}. The proof is now complete.

\end{proof}

Plots comparing the two bounds in Theorem \ref{thm_discrete_RW} and Theorem \ref{thm_discrete_main} are given in Figures \ref{fig_discrete_2}-\ref{fig_against_rate}. The error exponents are normalized by $(\rho_0 - \rho_1)^2$ for better visualization. In all of the plots, the Rahman and Wagner's upper bound in \cite{RahmanWagner2011} as evaluated in Theorem \ref{thm_discrete_RW}, and the new bound as evaluated in Theorem \ref{thm_discrete_main} are depicted.

{

Fig. \ref{fig_discrete_2} is a high-rate case.
In Fig. \ref{fig_discrete_2}, we fix $R=0.5$, and $\rho_1 = 0.8$ (or $\kappa_1=0.1$).
The centralized bound is tight when 
$$R\geq D(P_{XY}\|Q_{XY})+H_P(X|Y)=(1 - \kappa_0) \log_2 \left( \frac{1}{1 - \kappa_1} \right) + \kappa_0 \log_2 \left( \frac{1}{\kappa_1} \right).$$
Moreover, by Theorem \ref{thm-general-position}, it is not tight when
$R<H_2(\kappa_0)$
where $H_2$ is the binary entropy function. Equivalently, in our example, the centralized bound is tight when
$$\kappa_0\leq \frac{R+\log_2(1-\kappa_1)}{\log_2(1-\kappa_1)-\log_2(\kappa_1)}=0.10978...$$
This corresponds to $\rho_0\geq 0.78043...\approx 0.78$.
The new bound  improves over the centralized bound when
$$\kappa_0>H_2^{-1}(0.5)\approx 0.11002...$$
This corresponds to $\rho_0\leq 0.77994...\approx 0.78$.
Fig. \ref{fig_discrete_2}, shows that both Rahman and Wagner's upper bound and the new bound reduce to the centralized bound after $\rho_0 > 0.78...$. Note that the new bound yields an improvement for $\rho_0 < 0.78...$ with respect to the centralized bound, whereas Rahman and Wagner's upper bound does not. 

Fig. \ref{fig_discrete_1} is a low-rate case. Fig. \ref{fig_discrete_1} plots the bounds in a low rate regime where we set $R=0.1$ and $\rho_1=0.7$. We see that the new bound has a kink because for $\rho_0\leq 0.92...$ case (a) in Theorem \ref{thm_discrete_main} yields the best bound, while for $\rho_0 \geq 0.92...$ case (b) yields the best bound.

Fig. \ref{fig_against_rate} plots the two cases in Theorem \ref{thm_discrete_main} along with the Rahman-Wagner bound and the centralized bound under different rates while fixing $\rho_0 = 0.95$ and $\rho_1 = 0.8$. Both the new bound and the Rahman-Wagner bound approach zero when the rate goes to zero. The case (b) in Theorem \ref{thm_discrete_main} is always below the Rahman-Wagner bound for these parameters. The case (a) in Theorem \ref{thm_discrete_main} beats case (b) after $R > 0.08...$.

\begin{remark}
    For Fig. \ref{fig_discrete_2}, \ref{fig_discrete_1}, and \ref{fig_against_rate}, we numerically observe that case (a) in the new bound reduces to the one given in \eqref{rhs19eq}  with the choice of $Z_a=X$ under both $P$ and $Q$.
\end{remark}
}

\begin{figure}[t]
    \centering
    \includegraphics[width=1\textwidth]{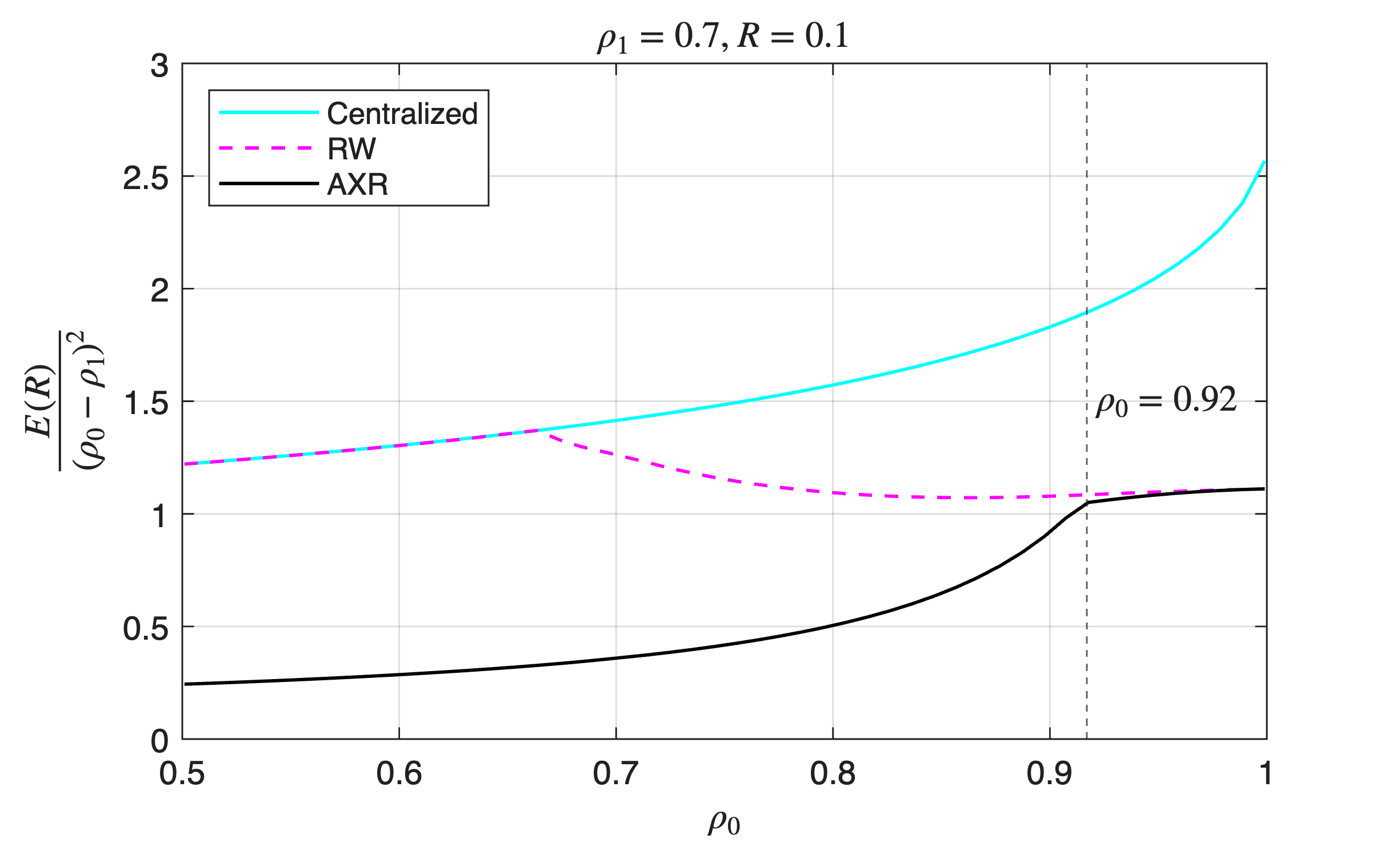}
    \caption{DSBS sources in the low rate regime: exponents (normalized by \((\rho_0 - \rho_1)^2\)) as a function of \(\rho_0\) for fixed \(\rho_1 = 0.7\) and rate \(R = 0.1\). In this case, $R = 0.1 <H_P(X|Y)$ for $\rho_0 \in [0.5,1]$. By Theorem \ref{thm-general-position}, the centralized bound is not tight. The new bound beats both the centralized bound and the Rahman-Wagner bound.}

\label{fig_discrete_1}
\end{figure}

\begin{figure}[h]
    \centering
    \includegraphics[width=1\textwidth]{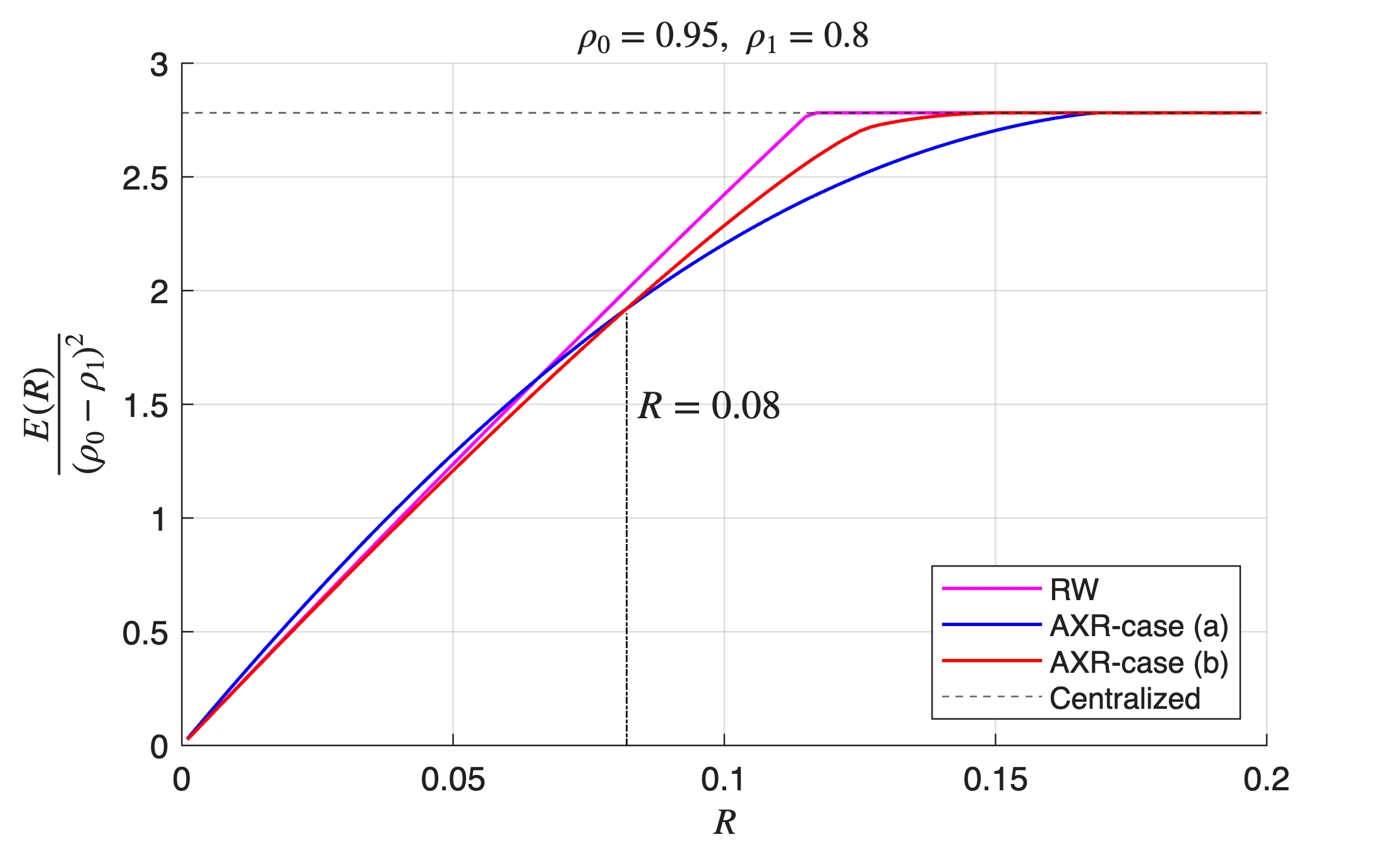}
    \caption{A DSBS source in different rate regimes: exponents (normalized by \((\rho_0 - \rho_1)^2\)) as a function of rate \(R\) for fixed \(\rho_0 = 0.95\) and \(\rho_1 = 0.8\). The new bound, which is the minimum of case (a) and (b), beats the centralized bound and the Rahman-Wagner bound.}

\label{fig_against_rate}
\end{figure}

\subsection{Strict improvement for Gaussian Sources}

Next, we show that the bound in Corollary \ref{corr-weaker} can provide better bounds than the one in \cite{RahmanWagner2011}. 
Authors in \cite{RahmanWagner2011} evaluate their bound for the following Gaussian setting: consider a Gaussian scenario where \(X\) and \(Y\) are jointly Gaussian sources with zero mean and unit variance. Assume that their correlation coefficients under hypotheses \(H_0\) and \(H_1\) are \(\rho_0\) and \(\rho_1\), respectively. In other words, under the hypothesis $H_i$, we have \begin{align*}
 & K_{X,Y} \sim \mathcal{N}\bigg(0,\begin{bmatrix} 
1&\rho_i\\\rho_i&1 
\end{bmatrix}\bigg) 
 \end{align*} 
where \(\rho_0 \in [-1,1]\), and without loss of generality, \(\rho_1 \in [0,1]\).
Even though we stated our bounds for discrete settings, a similar argument goes through for jointly Gaussian random variables. Moreover, in this section, we measure the communication rate in nats per symbol, the logarithms and the exponent are in base $e$.

\subsubsection{Existing Bounds}
The authors in \cite{RahmanWagner2011} and  \cite{hadar2019error} provide the following bounds:
\begin{theorem}[Theorem 9 in \cite{RahmanWagner2011}]
    For the Gaussian setting, we have
   \begin{align*}
    E_{RW}(R) &=
    \frac{1}{2} \log \left ( \frac{1}{1 - {\rho}^2 + {\rho}^2 e^{-2R}}\right) + C
\end{align*}
    where
\begin{align*}
\mathcal{D}_1 &\triangleq \{(\rho_0,\rho_1) : 0 \le \rho_1 < \rho_0 < 1\}, \\
\mathcal{D}_2 &\triangleq \{(\rho_0,\rho_1) : 0 \le \rho_1 \hspace{0.05in}\textrm{and}\hspace{0.05in} 2 \rho_1 - 1 \le \rho_0 < \rho_1\}, \\
\mathcal{D}_3 &\triangleq \Big \{(\rho_0,\rho_1) : -1 < \rho_0 \le 2 \rho_1 - 1  \hspace{0.05in}\textrm{and}
\\&\hspace{0.05in} \frac{ 2 \rho_1}{1-\rho_1} \le \frac{1}{2} \log \left(\frac{1-\rho_1^2}{1-\rho_0^2}\right) - \frac{ \rho_1(\rho_0-\rho_1)}{1-\rho_1^2} \Big\}.
\\&\rho \triangleq \left\{
\begin{array}{l l}
  \frac{\rho_0 - \rho_1}{1 - \rho_1} & \quad \mbox{if $(\rho_0,\rho_1)$ is in $\mathcal{D}_1 \cup \mathcal{D}_2$},\\
  \frac{\rho_0 + \rho_1}{1 - \rho_1} & \quad \mbox{if $(\rho_0,\rho_1)$ is in $\mathcal{D}_3$}. \\ \end{array}\label{defrho} \right.
\\&C \triangleq \left\{
\begin{array}{l l}
  0 & \quad \mbox{if $(\rho_0,\rho_1)$ is in $\mathcal{D}_1 \cup \mathcal{D}_2$},\\
  \frac{2 \rho_1}{1 - \rho_1} & \quad \mbox{if $(\rho_0,\rho_1)$ is in $\mathcal{D}_3$}. \\ \end{array} \right.
\end{align*}
\label{theoremR}
\end{theorem}

\begin{theorem}[Theorem 1 in \cite{hadar2019error}]
    For any \(\rho_0, \rho_1 \geq 0\),
    \begin{align*}
        E_{HLPS}(R) = \frac{R}{\Big(\frac{1 - \min\{\rho_0, \rho_1\}}{\rho_1 - \rho_0} \Big)^2 - 1}.
    \end{align*}
\end{theorem}

In terms of lower bounds, the expression of SHA exponent in the Gaussian setting was given in \cite{Yuval2024}:
\begin{theorem}[Equations (35)-(40) in \cite{Yuval2024}] For the Gaussian hypothesis testing the following exponent is achievable:
\begin{align*}
    & E_{SHA}(R) = \sup_{\eta \geq 0} \max \Big [ E_{\eta}(R), \max_{0 \leq \beta \leq \beta_{\max}}
    \min \{ E_{\eta}(\beta R), \overline{E}_\eta(\beta R) - (\beta - 1)R + E_\eta (\beta R) {\mathbf{1}}_{\{ \rho_0 < \rho_1 \}} \} \Big]
\end{align*}
where 
\begin{align*}
    &E_{\eta}(R) = D \Big ( \frac{(\rho_1 - \eta)^2 \sigma_R^2 + \rho_1^2(1 - \sigma_R^2)}{1 + \eta(\eta - 2\rho_1)\sigma_R^2}, \frac{1 + \eta(\eta - 2\rho_0)\sigma_R^2}{1 + \eta(\eta - 2\rho_1)\sigma_R^2}  \Big ),
    \\&\overline{E}_\eta (R) = D \Big ( \frac{\eta^2 \sigma_R^2}{1 + \eta^2 \sigma_R^2}, \frac{1 + \eta(\eta - 2\rho_0)\sigma_R^2}{1 + \eta^2 \sigma_R^2} \Big ),
    \\& \sigma_R^2 = 1 - \exp(-2R),
    \\& D(\gamma, P) = \frac{\gamma + P - A(\gamma, P)}{2(1 - \gamma)} - \frac{1}{2} \log \frac{A(\gamma, P) - 1 + \gamma}{2 \gamma},
    \\& A(\gamma, P) = \sqrt{(1 - \gamma)^2 + 4P\gamma}.
\end{align*}
The maximization limit \(\beta_{\max}\) is the unique solution of \(\overline{E}_\eta (\beta R) = (\beta - 1) R\).
\end{theorem}

\subsubsection{New Bound}
Below, we state our new bound by optimizing the lowest upper bound using Corollary \ref{corr-weaker} (which is a weaker version of our main bound in Theorem \ref{theorem_1}):
\begin{theorem}
    For the standard Gaussian setting and assuming $\rho_0 \in [-1,1], \rho_1 \in [0,1]$, we have
\begin{align*}
    E_{P_{Z|XY},Q_{Z|XY}}(R) \leq 
\min_{(P_{Z|XY}, Q_{Z|XY}) \in \mathcal{R}} &\Big \{
\max_{P_{U|X} \in \mathcal{T}_R} I_P(Y;U | Z) + D(P_{YZ} \| Q_{YZ})\Big \} , \end{align*}
where
    \begin{align*}
        \mathcal{R} &= \big \{ (P_{Z|XY}, Q_{Z|XY}): Q_{YZ|X} = Q_{Z|X}Q_{Y|Z}, P_{XZ} = Q_{XZ} \big \},\\
        \mathcal{T}_R &= \left \{ P_{U|X} : R \geq \max(I_P(U;X|Z), I_P(U;X|Y) \right \}.
    \end{align*}
    For evaluation of the bound, one should use the following expression:
\begin{align*}
    E_{P_{XY},Q_{XY}}(R) \leq 
\min_{b,g} &\Big \{ I_P(Y;U | Z) 
+ D(P_{YZ} \| Q_{YZ})\Big \} , \end{align*}
where\begin{align*}
    \hat{\sigma}^{2} &= \max \left ( 
    \frac{1 - g^2}{\exp(2R) - g^2} ,
    \frac{1 - (a + bg)^2}{\exp(2R) - (a + bg)^2}
    \right),
\\ I_P(Y;U | Z) &= \frac{1}{2} \log
\left 
    \{
    \frac
    {1 - [(\rho_0 - bg)g + b]^2}
    {1 - \rho_0^2 (1 - \hat\sigma^2) - \frac{[\rho_0 g \hat\sigma^2 + b (1 - g^2)]^2}
    {1 - (1 - \hat\sigma^2)g^2}}
    \right 
    \}, \\
 D(P_{YZ} \| Q_{YZ}) &= \frac{g^2 - {\rho_1}{g}[(\rho_0 - bg)g + b] }{{g^2} - {\rho_1^2}} - 1 + \frac{1}{2} \log \frac{{g^2} - {\rho_1^2}}{{g^2} - {g^2}[(\rho_0 - bg)g + b]^2},\end{align*}
where the minimum is over $b,g$ satisfying
\begin{align*}
& g \in [-1,1],\quad  b^2(1 - g^2) \leq 1 - \rho_0^2, \quad ((\rho_0 - bg)g + b)^2 \leq 1, \quad  \rho_1^2 \leq g^2. 
\end{align*}
\label{theoremG}
\end{theorem}
Plots comparing the bound with all the other known bounds are given in Fig.~\ref{fig_compare_1} and Fig.~\ref{fig_compare_2}. One can observe that our bound is always better than the bound in \cite{RahmanWagner2011} for these parameters. Computing the bound in \cite{Yuval2024} is difficult, so we did not replicate the simulation done by \cite{Yuval2024} and just extracted the points from the curve given in \cite{Yuval2024}. One can observe that the auxiliary receiver bound is better than the bound in \cite{Yuval2024} when $\rho_0$ is not close to one. For $\rho_0$ close to one, the bound in \cite{Yuval2024} is better, which suggests multiple auxiliary receivers can yield strictly better bounds for large $\rho_0$. We have discussed combining the idea in \cite{Yuval2024} with our technique in Section \ref{sec31}.

\begin{figure}[h]
    \centering
    \includegraphics[width=1\textwidth]{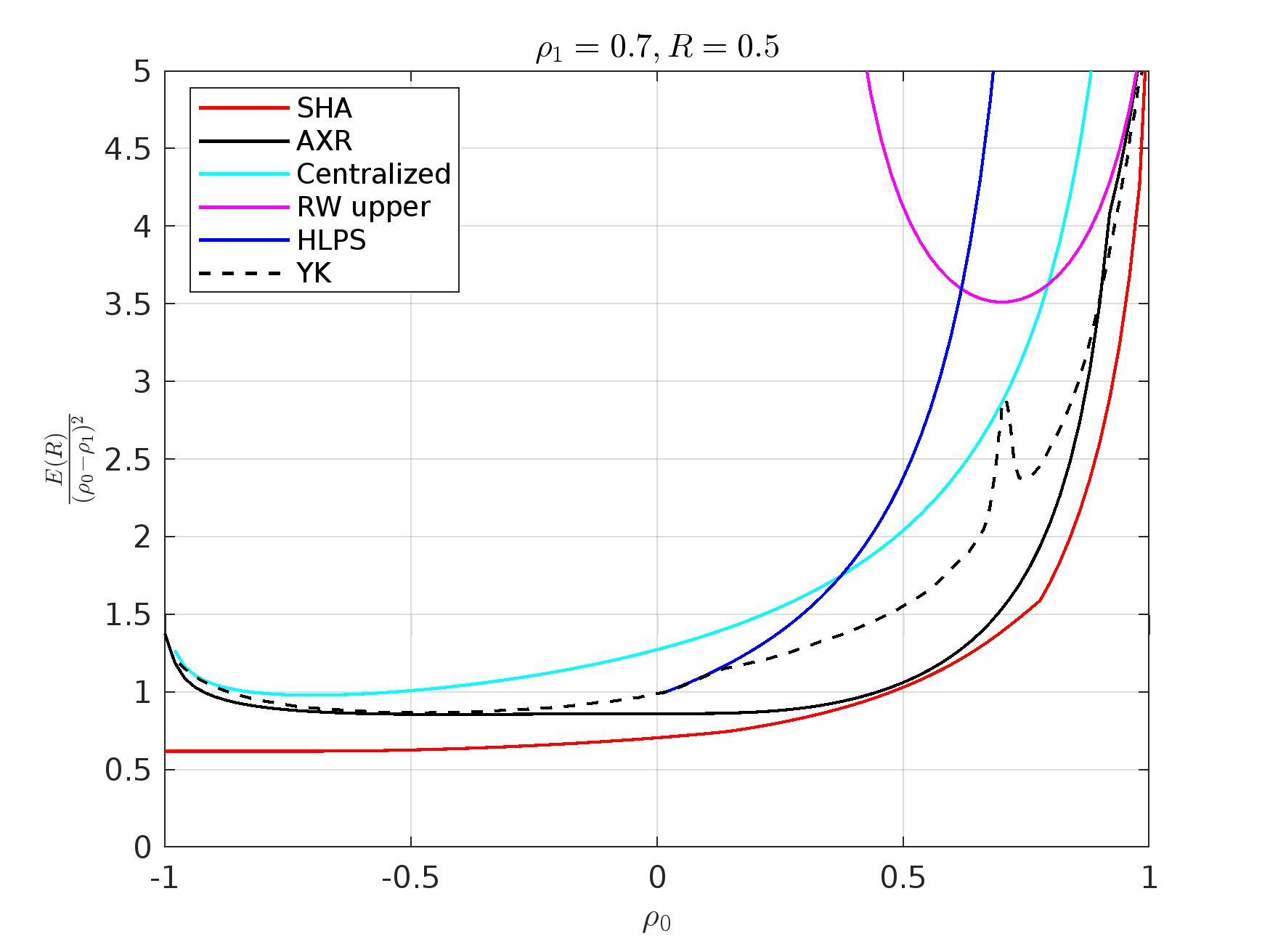}
    \caption{Exponents (normalized by \((\rho_0 - \rho_1)^2\)) as a function of \(\rho_0\) for fixed \(\rho_1 = 0.7\) and rate \(R = 0.5\). Red is SHA's achievable exponent. Cyan is the centralized upper bound. Blue is HLPS upper bound in \cite{hadar2019error}. Magenta is RW upper bound in \cite{RahmanWagner2011}. Black is our bound in Theorem \ref{theoremG}. Dashed black is the upper bound extracted from the plot in \cite{Yuval2024}.}
    \label{fig_compare_1}
\end{figure}

\begin{figure}[h]
    \centering
    \includegraphics[width=1\textwidth]{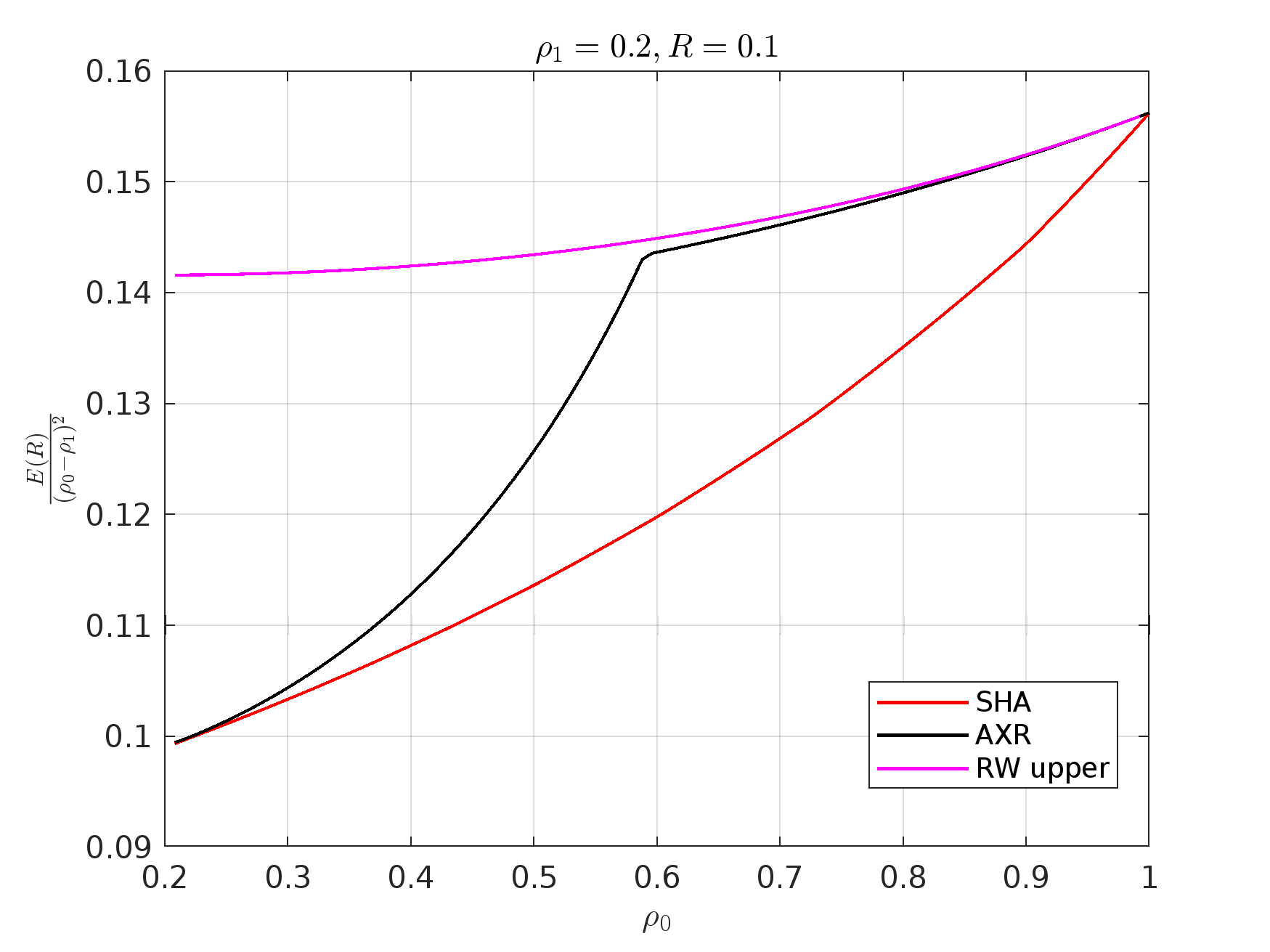}
    \caption{Exponents (normalized by \((\rho_0 - \rho_1)^2\)) with fixed \(\rho_1 = 0.2\) and rate \(R = 0.1\). In this case, the other bounds discussed are large and thus not shown here. \(\rho_0 \in [\rho_1, 1]\) for a better visual illustration with RW upper bound in \cite{RahmanWagner2011}. AXR (auxiliary receiver) upper bound goes to RW upper bound as \(\rho_0 \rightarrow 1\) and to the achievable SHA exponent as \(\rho_0 \rightarrow \rho_1\).}
    \label{fig_compare_2}
\end{figure}

\begin{proof}[Proof of Theorem \ref{theoremG}]

Under $P$, assume that
$$Z=g X+\sqrt{1-g^2} N_1,$$
$$Y=aX+bZ+N_3\sqrt{1 - a^2 - b^2-2abg},$$
and under $Q$, assume that
$$Z=g X+\sqrt{1-g^2} N_1,$$
$$Y=fZ+\sqrt{1-f^2} N_2,$$
where $I(N_1;X)=I(N_2;Z)=I(N_3;X,Z)=0$ and $X,Y,Z,N_1,N_2,N_3$ are standard Gaussian random variables.

Then we remove redundant parameters using \(\rho_0\) and \(\rho_1\) and obtain constraints on the free parameters.
Under \(P\), the covariance of \(X\) and \(Y\) is \(\rho_0\):
\begin{align*}
    &Cov(X,Y) = a + bg = \rho_0,\\
    &Cov(Y,Z) = ag + b \in [-1,1], \\
    &Cov(Y, N_2) = \sqrt{1 - (a^2 + b^2 + 2abg)} \in [-1,1], \\
    &Cov(X,Z) = g \in [-1,1].
\end{align*}
Under \(Q\), the covariance of \(X\) and \(Y\) is \(\rho_1\), thus:
\begin{align*}
    &Cov(X,Y) = Cov(X, fgX) = fg = \rho_1,\\
    &Cov(Y,Z) = f \in [-1,1].
\end{align*}

The above characterization gives
\begin{align*}
    &I_P(Y;U |Z) =  \frac12 \log \left 
    \{
    \frac
    {1 - (ag + b)^2}
    {1 - (a + bg)^2 (1 - \sigma^2) - \frac{[(a+bg)g\sigma^2 + b (1 - g^2)]^2}
    {1 - (1 - \sigma^2)g^2}}
    \right 
    \},
\end{align*}

and 
\[
D(P_{YZ} \| Q_{YZ}) = \frac{1 - f(ag+b)}{1 - f^2} - 1 + \frac12 \log \frac{1 - f^2}{1 - (ag+b)^2}.
\]
We show the optimality of a jointly Gaussian distribution on $(U,X)$ when evaluating $G_{P_{XY},Q_{XY},P_{XZ},Q_{XZ}}(R)$ (and hence the above weaker upper bound) in Appendix \ref{appGaussian}. 

If the variance of \(P_{X|U}\) is \(\sigma^2 \leq 1\). The derivative of the denominator of \(I_P(Y;U|Z)\) is $\frac{a^2 (g^2 - 1)^2}{(g^2 \sigma^2 - g^2 + 1)^2} \geq 0$. The optimal \(\sigma^2\) that maximizes \(I_P(Y;U|Z)\) within feasible region \(\mathcal{T}_R\) is then given by:
\begin{align*}
    \hat{\sigma}^{2} = \max \left ( 
    \frac{1 - g^2}{\exp(2R) - g^2} ,
    \frac{1 - (a + bg)^2}{\exp(2R) - (a + bg)^2}
    \right).
\end{align*}

\end{proof}

\subsection{The bound in \cite{Yuval2024} and a sequence of auxiliary receivers}\label{sec31}
\begin{definition}[\cite{Yuval2024}]\label{definition_1b}
    Let \(\mathcal{\tilde{R}}(P_{XY}, Q_{XY})\) be the set of all \((P_{Z|XY}, Q_{Z|XY})\) such that
\begin{align}
      Q_{YZ|X} &= Q_{Z|X} Q_{Y|Z}, \label{eqnDef2a} \\
     P_{Y|Z} &= Q_{Y|Z}. \label{eqnDef2b} 
\end{align} 
\end{definition}
The following bound was given in \cite{Yuval2024}, which is in the same spirit as the auxiliary receiver bound:

\begin{lemma}[Lemma 1 in \cite{Yuval2024}]\label{lemmaYuval}
    For any \((P_{Z|XY}, Q_{Z|XY})\) in \(\tilde{\mathcal{R}}(P_{XY}, Q_{XY})\):
\begin{align*}
 D(P_{M Y^n} \| Q_{M Y^n})
 &\leq \max_{P_{U|X}\in\mathcal{S}(P_{XYZ})}[I_P(Y;U|Z) ]+ D(P_{M Z^n} \| Q_{M Z^n}). 
\end{align*} 

\end{lemma}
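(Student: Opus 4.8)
The plan is to run the single-letterization of the proof of Theorem~\ref{theorem_1} essentially verbatim, but to replace the step that bounds the ``difference term'' from above by instead identifying it \emph{exactly} with a conditional mutual information, exploiting the two extra structural properties defining $\tilde{\mathcal{R}}(P_{XY},Q_{XY})$. Concretely, fix $n$ and the encoder $f_n$, and generate $Z^n$ from $(X^n,Y^n)$ through the memoryless channels $P_{Z|XY}$ and $Q_{Z|XY}$ under the two hypotheses, so that $(X^n,Y^n,Z^n)$ is i.i.d.\ under each of $P$ and $Q$, and $M$ is a deterministic function of $X^n$ (hence $Z^n\to(X^n,Y^n)\to M$ and $M\to(X^n,Y^n)\to Z^n$ under both). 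Exactly as in the proof of Theorem~\ref{theorem_1}, telescope
\[
D(P_{MY^n}\|Q_{MY^n}) - D(P_{MZ^n}\|Q_{MZ^n}) = \sum_{i=1}^n\bigl( D(P_{U_iY_i}\|Q_{U_iY_i}) - D(P_{U_iZ_i}\|Q_{U_iZ_i})\bigr),
\]
with $U_i=(M,Y^{i-1},Z_{i+1}^n)$; the finiteness of the telescoping terms is justified as there, using $D(P_{XY}\|Q_{XY})<\infty$ and $D(P_{XZ}\|Q_{XZ})<\infty$.

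The core of the argument is the per-letter identity-cum-inequality
\[
D(P_{U_iY_i}\|Q_{U_iY_i}) - D(P_{U_iZ_i}\|Q_{U_iZ_i}) \le I_P(Y_i;U_i\mid Z_i).
\]
To get it, cancel the common $D(P_{U_i}\|Q_{U_i})$ and, since $Z_i$ is discarded, apply the data-processing inequality to obtain $D(P_{Y_i\mid U_i}\|Q_{Y_i\mid U_i}\mid P_{U_i})\le D(P_{Y_iZ_i\mid U_i}\|Q_{Y_iZ_i\mid U_i}\mid P_{U_i})$; then the chain rule for relative entropy rewrites the right side as $D(P_{Z_i\mid U_i}\|Q_{Z_i\mid U_i}\mid P_{U_i}) + D(P_{Y_i\mid Z_iU_i}\|Q_{Y_i\mid Z_iU_i}\mid P_{Z_iU_i})$, so the whole difference collapses to $D(P_{Y_i\mid Z_iU_i}\|Q_{Y_i\mid Z_iU_i}\mid P_{Z_iU_i})$. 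Now invoke the Markov chain $U_i\to X_i\to Y_iZ_i$, valid under both hypotheses by the i.i.d.\ assumption exactly as in the proof of Theorem~\ref{theorem_1}; combined with \eqref{eqnDef2a} it yields $Q_{Y_i\mid Z_iU_i}=Q_{Y_i\mid Z_i}$, and \eqref{eqnDef2b} replaces $Q_{Y_i\mid Z_i}$ by $P_{Y_i\mid Z_i}$. Hence the difference equals $D(P_{Y_i\mid Z_iU_i}\|P_{Y_i\mid Z_i}\mid P_{Z_iU_i})=I_P(Y_i;U_i\mid Z_i)$.

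It remains to single-letterize. Introduce a time-sharing variable $T$ uniform on $\{1,\dots,n\}$, independent of everything, and set $U=(T,U_T)$, $X=X_T$, $Y=Y_T$, $Z=Z_T$; then $\sum_i I_P(Y_i;U_i\mid Z_i)=n\,I_P(Y;U\mid Z)$ for the induced $P_{U|X}$, which satisfies the Markov chain $U\to X\to YZ$. The rate constraint $I_P(X;U\mid Z)\le R$ follows verbatim from the chain \eqref{eqnf2}--\eqref{eqnMXZ2} in the proof of Theorem~\ref{theorem_1}: $nR\ge H_P(M)\ge I_P(M;X^n\mid Z^n)=\sum_i I_P(U_i;X_i\mid Z_i)=n\,I_P(U;X\mid Z)$. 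Thus $P_{U|X}\in\mathcal{S}(P_{XYZ})$, and combining everything gives
\[
D(P_{MY^n}\|Q_{MY^n}) - D(P_{MZ^n}\|Q_{MZ^n}) \le n\,I_P(Y;U\mid Z)\le n\max_{P_{U|X}\in\mathcal{S}(P_{XYZ})} I_P(Y;U\mid Z),
\]
which is the claimed bound (with the per-symbol normalization of the divergences used in \cite{Yuval2024}).

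I expect the main obstacle to be the per-letter step: one must orient the data-processing inequality correctly (marginalizing out $Z_i$ can only decrease $D(P_{Y_iZ_i\mid U_i}\|Q_{Y_iZ_i\mid U_i})$, so it provides an \emph{upper} bound on the $Y_i$-term that enters with a plus sign), and then deploy both defining properties of $\tilde{\mathcal{R}}$ at the right place---\eqref{eqnDef2a} together with the i.i.d.\ Markov chain to turn $Q_{Y_i\mid Z_iU_i}$ into $Q_{Y_i\mid Z_i}$, and \eqref{eqnDef2b} to swap $Q_{Y_i\mid Z_i}$ for $P_{Y_i\mid Z_i}$. Everything else is a direct transcription of the corresponding steps of the proof of Theorem~\ref{theorem_1}, noting only that here the weaker membership $P_{U|X}\in\mathcal{S}(P_{XYZ})$ (requiring just $I_P(X;U\mid Z)\le R$) suffices, since the $Y$-side rate constraint is never needed.
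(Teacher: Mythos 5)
Your proof is correct and follows essentially the same route by which the paper itself recovers this lemma: the telescoping identity with $U_i=(M,Y^{i-1},Z_{i+1}^n)$ and the rate-constraint chain are taken verbatim from the proof of Theorem~\ref{theorem_1}, while your per-letter step (data processing to adjoin $Z_i$, chain rule, then $Q_{Y_i|Z_iU_i}=Q_{Y_i|Z_i}=P_{Y_i|Z_i}$ via \eqref{eqnDef2a}, the Markov chain $U_i\to X_i\to Y_iZ_i$, and \eqref{eqnDef2b}) is exactly the argument of Theorem~\ref{theorem_wagner} combined with the observation in Section~\ref{sec31}, only applied directly to the true $Q$ rather than through the $\hat{Q}_X$-maximization defining $G$. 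The only caveats are minor and acknowledged by you: the finiteness assumptions of Theorem~\ref{theorem_1} are imported to justify the telescoping, and the derivation yields the bound with the factor $n$ (equivalently, per-symbol normalized divergences) that the lemma's display implicitly assumes.
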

Observe that the bound stated in Theorem \ref{theorem_1} allows for arbitrary choices of $P_{Z|XY}$ and $Q_{Z|XY}$, while the one in Lemma \ref{lemmaYuval} is restricted to a class of auxiliary receivers \((P_{Z|XY}, Q_{Z|XY})\) in \(\tilde{\mathcal{R}}(P_{XY}, Q_{XY})\). 
Note that
\((P_{Z|XY}, Q_{Z|XY})\) in \(\tilde{\mathcal{R}}(P_{XY}, Q_{XY})\) requires  \(P_{Y|Z} = Q_{Y|Z}\). Therefore, 
\begin{align*}
    D(P_{YZU} \| P_{UZ} Q_{Y|Z})&=D(P_{YZU} \| P_{UZ} P_{Y|Z})=I_P(Y;U|Z).
\end{align*}
Theorem \ref{theorem_wagner} then implies that the bound in Theorem \ref{theorem_1} is less than or equal to the one in Lemma \ref{lemmaYuval}.

The author in \cite{Yuval2024} suggests the use of a sequence of auxiliary receivers $Z_1, Z_2, \cdots, Z_k$ to derive an improved bound. This approach can also be integrated with our method, which might result in further improvement. In other words, for any sequence of auxiliary receivers $Z_1, Z_2, \cdots, Z_k$ we can write
\begin{align}
 & E_{P_{XY},Q_{XY}}(R) \leq G_{P_{XY},Q_{XY},P_{XZ_1},Q_{XZ_1}}(R)+\sum_{j=1}^{k-1}G_{P_{XZ_j},Q_{XZ_j},P_{XZ_{j+1}},Q_{XZ_{j+1}}}(R)+E_{P_{XZ_k},Q_{XZ_k}}(R).\label{eqnAnn1}
\end{align}

To motivate \eqref{eqnAnn1}, consider the bound
\begin{align*}
 & E_{P_{XY},Q_{XY}}(R) \leq G_{P_{XY},Q_{XY},P_{XZ},Q_{XZ}}(R)+E_{P_{XZ},Q_{XZ}}(R).
\end{align*}
Another trivial bound is the centralized bound:
\begin{align*}
    E_{P_{XY},Q_{XY}}(R)\leq D(P_{XY}\|Q_{XY}).
\end{align*}
The following example shows that the centralized bound might be better in some cases:
\begin{example}\label{exampleinf}
    Assume that $P_{Y|X}, Q_{Y|X}, P_{Z|X}$ and $Q_{Z|X}$ are additive Gaussian noise channels with additive noise variances $\sigma_{P,Y|X}^2$, $\sigma_{Q,Y|X}^2$,$\sigma_{P,Z|X}^2$, $\sigma_{Q,Z|X}^2$ respectively. Assume that $X\sim \mathcal{N}(0,\sigma_X^2)$ under $P$. We claim that $f_{P_{Y|X}, P_{Z|X}, Q_{Y|X}, Q_{Z|X}}(P_X)=\infty$ if $\sigma_{Q,Y|X}<\sigma_{Q,Z|X}$.
    Assume that $X\sim \mathcal{N}(k,\sigma_X^2)$ under $\hat Q_X$. 
    Let
    \begin{align*}
\sigma_{P,Y}&=\sqrt{\sigma_X^2+\sigma_{P,Y|X}^2}, \qquad
\sigma_{P,Z}=\sqrt{\sigma_X^2+\sigma_{P,Z|X}^2},
\\
\sigma_{Q,Y}&=\sqrt{\sigma_X^2+\sigma_{Q,Y|X}^2},\qquad
\sigma_{Q,Z}=\sqrt{\sigma_X^2+\sigma_{Q,Z|X}^2}.
    \end{align*}
    
    Then, we get
    \begin{align*}
  &f_{P_{Y|X}, P_{Z|X}, Q_{Y|X}, Q_{Z|X}}(P_X) \\ &\geq  D(P_Y \| \hat{Q}_Y) - D(P_Z \| \hat{Q}_Z) 
  \\&=\log\frac{\sigma_{Q,Y}}{\sigma_{P,Y}}+\frac{\sigma_{P,Y}^2-\sigma^2_{Q,Y}+k^2}{2\sigma^2_{Q,Y}} -\log\frac{\sigma_{Q,Z}}{\sigma_{P,Z}}-\frac{\sigma_{P,Z}^2-\sigma^2_{Q,Z}+k^2}{2\sigma^2_{Q,Z}}.
\end{align*} 
Letting $k\rightarrow\infty$, we get that $f_{P_{Y|X}, P_{Z|X}, Q_{Y|X}, Q_{Z|X}}(P_X)=\infty$.
\end{example}
\begin{proposition}\label{thmJ}
Take some arbitrary $P_{J|XYZ}$ and $Q_{J|XYZ}$.
Assume that $P_{XYZJ}$ and $Q_{XYZJ}$, satisfy  $D(P_{XYJ}\|Q_{XYJ})<\infty$ and $D(P_{XZJ}\|Q_{XZJ})<\infty$. Let $Y'=(Y,J)$ and $Z'=(Z,J)$. Then, we have
\begin{align}
 & E_{P_{XY},Q_{XY}}(R)
 \leq 
G_{P_{XY'},Q_{XY'},P_{XZ'},Q_{XZ'}}(R) +E_{P_{XZ'},Q_{XZ'}}(R).\label{eqnJBound1}
\end{align}
\end{proposition}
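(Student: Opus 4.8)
The plan is to obtain this as an immediate consequence of Theorem~\ref{theorem_1}, applied not to $(X,Y,Z)$ but to the \emph{augmented} triple $(X,Y',Z')$, together with the elementary fact that handing the detector extra side information can only improve the error exponent. First I would fix the joint laws $P_{XYZJ}=P_{J|XYZ}P_{XYZ}$ and $Q_{XYZJ}=Q_{J|XYZ}Q_{XYZ}$ and note that $Y'=(Y,J)$ and $Z'=(Z,J)$ are deterministic functions of $(X,Y,Z,J)$, so they induce well-defined joint distributions $P_{XY'Z'}$, $Q_{XY'Z'}$, with marginals $P_{XY'}=P_{XYJ}$, $P_{XZ'}=P_{XZJ}$, and likewise under $Q$. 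The key claim is that $E_{P_{XY},Q_{XY}}(R)\le E_{P_{XY'},Q_{XY'}}(R)$: in the hypothesis-testing problem with source pair $(X,Y')$ the detector observes $(M,Y^n,J^n)$ and can mimic any detector for the $(X,Y)$ problem simply by discarding $J^n$, while the encoder sees $X^n$ in both problems and is unchanged; since both error probabilities of the mimicked scheme depend only on the law of $(M,Y^n)$, which is identical in the two problems, every $(\epsilon,n,R)$-admissible Type~II error for $(X,Y)$ is admissible for $(X,Y')$, and the exponent inequality follows after the usual limits. (Equivalently, at the multi-letter level one may use the data processing inequality $D(P_{MY^n}\|Q_{MY^n})\le D(P_{MY^nJ^n}\|Q_{MY^nJ^n})=D(P_{MY'^n}\|Q_{MY'^n})$ together with Theorem~\ref{thm14Ahlswede}.)

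Next I would invoke Theorem~\ref{theorem_1} with the roles of $Y$ and $Z$ played by $Y'$ and $Z'$. Its hypotheses require $D(P_{XY'}\|Q_{XY'})<\infty$ and $D(P_{XZ'}\|Q_{XZ'})<\infty$; by the identifications above these read $D(P_{XYJ}\|Q_{XYJ})<\infty$ and $D(P_{XZJ}\|Q_{XZJ})<\infty$, which are precisely the standing assumptions of the proposition. Theorem~\ref{theorem_1} then gives
\[
E_{P_{XY'},Q_{XY'}}(R)\le G_{P_{XY'},Q_{XY'},P_{XZ'},Q_{XZ'}}(R)+E_{P_{XZ'},Q_{XZ'}}(R),
\]
and combining this with the claim $E_{P_{XY},Q_{XY}}(R)\le E_{P_{XY'},Q_{XY'}}(R)$ yields \eqref{eqnJBound1}.

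I do not expect a real obstacle; the only two points deserving a line of justification are that $(X,Y',Z')$ is a legitimate input to Theorem~\ref{theorem_1} and that the finiteness hypotheses transfer. For the former, Theorem~\ref{theorem_1} places no Markov or non-degeneracy condition on the joint law, so the fact that the component $J$ is shared between $Y'$ and $Z'$ is harmless: the set $\mathcal{T}(P_{XY'},P_{XZ'})$, the function $f_{P_{Y'|X},P_{Z'|X},Q_{Y'|X},Q_{Z'|X}}$, and hence $G_{P_{XY'},Q_{XY'},P_{XZ'},Q_{XZ'}}$ are all well-defined from the marginals. For the latter, the transfer is immediate from $P_{XY'}=P_{XYJ}$ and $P_{XZ'}=P_{XZJ}$. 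This argument also makes transparent the footnote's remark: augmenting the detector's observation by $J$ is exactly the ``message augmentation'' step, so that route reproduces precisely the bound of Proposition~\ref{thmJ}.
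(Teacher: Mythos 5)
Your argument is correct, and it reaches \eqref{eqnJBound1} by a slightly different route than the paper. Both proofs ultimately rest on a single application of Theorem~\ref{theorem_1} to the augmented triple $(X,Y',Z')$ (with the finiteness hypotheses transferring exactly as you say, since $P_{XY'}=P_{XYJ}$ and $P_{XZ'}=P_{XZJ}$). The difference is in how the passage from $E_{P_{XY},Q_{XY}}(R)$ to the augmented receiver $Y'$ is justified: the paper invokes the chained auxiliary-receiver bound \eqref{eqnAnn1} with $Z_1=Y'$, $Z_2=Z'$, and then disposes of the first term by noting $G_{P_{XY},Q_{XY},P_{XY'},Q_{XY'}}(R)\leq 0$ (which is itself a data-processing fact, since $Y$ is a coordinate of $Y'$ so $D(P_{Y|u}\|\hat Q_Y)\leq D(P_{Y'|u}\|\hat Q_{Y'})$ for every $\hat Q_X$); you instead prove the monotonicity $E_{P_{XY},Q_{XY}}(R)\leq E_{P_{XY'},Q_{XY'}}(R)$ directly, either operationally (the detector may discard $J^n$) or via $D(P_{MY^n}\|Q_{MY^n})\leq D(P_{MY'^n}\|Q_{MY'^n})$ together with Theorem~\ref{thm14Ahlswede}. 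Your version is somewhat more self-contained, since it does not lean on \eqref{eqnAnn1} (stated without proof in the paper) or on the unproved claim that the first $G$-term is nonpositive, while the paper's version keeps the argument entirely inside the chained-auxiliary-receiver formalism; the two are otherwise equivalent in strength and assumptions.
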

\begin{proof}
Consider \eqref{eqnAnn1} with the  choice $Z_1=Y'$ and $Z_2=Z'$:
\begin{align*}
 &E_{P_{XY},Q_{XY}}(R) \\ &\leq G_{P_{XY},Q_{XY},P_{XY'},Q_{XY'}}(R) +G_{P_{XY'},Q_{XY'},P_{XZ'},Q_{XZ'}}(R)+E_{P_{XZ'},Q_{XZ'}}(R)
 \\&\leq G_{P_{XY'},Q_{XY'},P_{XZ'},Q_{XZ'}}(R)+E_{P_{XZ'},Q_{XZ'}}(R)
\end{align*}
as $G_{P_{XY},Q_{XY},P_{XY'},Q_{XY'}}(R)\leq 0$. 
\end{proof}
\begin{remark}
    If we set $J$ to be constant, both bounds recover the original bound. If we set $J=X$, we recover the centralized bound because 
\begin{align*}
  &f_{P_{Y'|X}, P_{Z'|X}, Q_{Y'|X}, Q_{Z'|X}}(P_X) \\&= \max_{\hat{Q}_X } D(P_{XY} \| \hat{Q}_{XY}) - D(P_{XZ} \| \hat{Q}_{ZX}) 
  \\
  &= \max_{\hat{Q}_X } D(P_{Y|X} \| \hat{Q}_{Y|X}|P_X) - D(P_{Z|X} \| \hat{Q}_{Z|X}|P_X)
  \\&= D(P_{Y|X} \| Q_{Y|X}|P_X) - D(P_{Z|X} \| Q_{Z|X}|P_X)
\end{align*} 
    where we used the fact that $\hat{Q}_{Z|X}=Q_{Z|X}$ and  $\hat{Q}_{Y|X}=Q_{Y|X}$. Observe that $f_{P_{Y'|X}, P_{Z'|X}, Q_{Y'|X}, Q_{Z'|X}}(P_X)$ is linear in $P_X$, and it is optimal to choose constant $U$ when evaluating $G_{P_{XY'},Q_{XY'},P_{XZ'},Q_{XZ'}}(R)$. We obtain that
\begin{align*}
    &G_{P_{XY'},Q_{XY'},P_{XZ'},Q_{XZ'}}(R)\\
    &=D(P_{Y|X} \| Q_{Y|X}|P_X) - D(P_{Z|X} \| Q_{Z|X}|P_X)\\
  &= D(P_{XY} \| Q_{XY}) - D(P_{XZ} \| Q_{XZ}).
\end{align*}
  Thus, from \eqref{eqnJBound1} and using $E_{P_{XZ'},Q_{XZ'}}(R)=D(P_{XZ} \| Q_{XZ})$ when $J=X$, we obtain
$$G_{P_{XY'},Q_{XY'},P_{XZ'},Q_{XZ'}}(R)+E_{P_{XZ'},Q_{XZ'}}(R)=D(P_{XY} \| Q_{XY})$$
  as desired.
\end{remark}

\appendices
\section{Alternative characterization of the upper bound of Rahman and Wagner}
\label{appendixlemmaaltern3}
The following alternative characterization of the Rahman-Wagner bound for discrete alphabet random variables is useful in our numerical evaluations: 
\begin{lemma}
    [Alternative characterization of the bound in Theorem \ref{RW_thm}] For discrete sources, the bound in Theorem \ref{RW_thm} can be equivalently written as
    $$
\min_{(P_{Z|XY}, Q_{Z|XY}) \in \mathcal R (P_{XY}, Q_{XY})} \left \{D(P_{YZ} \| Q_{YZ})+ \min_{\lambda \in[0,1]}\max_{P_{U|X}} [I_P(Y;U|Z) + \lambda(R - I_P(X;U|Z)) ] \right \}.
$$
where the maximum is over all $P_{U|X}$.\label{lemmaaltern3} 
\end{lemma}

\begin{proof}[Proof of Lemma \ref{lemmaaltern3}]
    The Rahman-Wagner bound can be expressed as
$$
\min_{(P_{Z|XY}, Q_{Z|XY}) \in \mathcal R (P_{XY}, Q_{XY})} \left \{D(P_{YZ} \| Q_{YZ})+ \max_{P_{U|X}}\min_{\lambda \geq 0} [I_P(Y;U|Z) + \lambda(R - I_P(X;U|Z)) ] \right \}.
$$
To prove the lemma, it suffices to show the following two equalities:
\begin{align}
\max_{P_{U|X}}\min_{\lambda \geq 0} [I_P(Y;U|Z) + \lambda(R - I_P(X;U|Z)) ]&=\min_{\lambda \geq 0}\max_{P_{U|X}} [I_P(Y;U|Z) + \lambda(R - I_P(X;U|Z)) ]  \label{step1show}\\&=\min_{\lambda\in[0,1]}\max_{P_{U|X}} [I_P(Y;U|Z) + \lambda(R - I_P(X;U|Z)) ].\label{step2show}
\end{align}
We first show the equality in \eqref{step2show}. For any $\lambda \geq 1$,
\begin{align*}
    & I(Y;U|Z) - I(X;U|Z) + \lambda R - (\lambda - 1) I(X;U|Z) \\
    &= -I(X;U|Y,Z) + \lambda R - (\lambda - 1) I(X;U|Z) \\
    &\leq \lambda R.
\end{align*}
Moreover, $\lambda R$ is achieved when $U$ is a constant random variable. Therefore, for any $\lambda\geq 1$, we have
$$\max_{P_{U|X}} [I_P(Y;U|Z) + \lambda(R - I_P(X;U|Z))]=\lambda R$$
and is minimized when $\lambda=1$. 

It remains to show
the equality in \eqref{step1show}. Let
$$\mathcal{A}=\{(a_1,a_2): a_1=I(X;U|Z), a_2=I(Y;U|Z), \quad\text{for some }p(u|x)\}.$$
The equality in \eqref{step1show} can be expressed as
\begin{align}
\max_{(a_1,a_2)\in\mathcal{A}}\min_{\lambda \geq 0} [a_2 + \lambda(R - a_1) ]&=\min_{\lambda \geq 0}\max_{(a_1,a_2)\in\mathcal{A}} [a_2 + \lambda(R - a_1) ].  \end{align}
We use Sion's Minimax Theorem \cite{sion1958} to show that we can exchange the order of $\max$ and  $\min$. We can use Sion's minimax theorem because
\begin{itemize}
    \item The function $a_2 + \lambda(R - a_1)$ is linear in $a_1$, $a_2$, and $\lambda$.
    \item The domain set  $\{\lambda: \lambda \geq 0 \}$ is a convex set. 
    \item The set $\mathcal{A}$ is a closed, compact and convex set. To see this, observe that the set $\mathcal{A}$ is compact because
$a_1\in[0,H(X|Z)]$ and $a_2\in[0,H(Y|Z)]$; it is
closed because one can impose a cardinality bound of $|\mathcal{U}|\leq |\mathcal{X}|+2$ on the auxiliary random variable $U$ using the standard arguments based on the Caratheodery's theorem. The set $\mathcal{A}$ is convex because for any $p_{U_1|X}$ and $p_{U_2|X}$, we can take $Q\in\{1,2\}$ to be a Bernouli (1/2) random variable that is independent of all previously defined variables and define $U=(Q,U_Q)$. 

\end{itemize}

\end{proof}

\section{Evaluation of the upper bound in the Gaussian setting}

\label{appGaussian}
Consider the following optimization problem
 \begin{align*}
G_{P_{XY},Q_{XY},P_{XZ},Q_{XZ}}(R)=\sup_{P_{U|X}} \mathbb{E}_{P_U}\left[\max_{\hat Q_X}D(P_{Y|U}\|\hat Q_{Y})-D(P_{Z|U}\|\hat Q_Z)\right]
 \end{align*} 
 where the supremum is over $P_{U|X}$ satisfying  $I_P(U;X|Y)\leq R$ and $I_P(U;X|Z)\leq R$. 
Assume that $X$ is a scalar standard Gaussian random variable under both hypotheses. Assume that $P_{Y|X}, Q_{Y|X}, P_{Z|X}$ and $Q_{Z|X}$ are additive Gaussian noise channels with additive noise variances $\sigma_{P,Y|X}^2$, $\sigma_{Q,Y|X}^2$, $\sigma_{P,Z|X}^2$, $\sigma_{Q,Z|X}^2$ respectively.
If $\sigma_{Q,Z|X}>\sigma_{Q,Y|X}$, the above expression becomes infinity. This can be seen by choosing a constant random variable $U$ and following the choice in Example \ref{exampleinf} given in Section \ref{sec31}.

In the following, we assume that $\sigma_{Q,Z|X}\leq \sigma_{Q,Y|X}$. In this case, we can find a jointly Gaussian distribution $Q_{XYZ}$ (a coupling) such that
$$Q_{XYZ}=Q_{X}Q_{Z|X}Q_{Y|Z}.$$
Note that the assumption $Q_{XYZ}=Q_{X}Q_{Z|X}Q_{Y|Z}$ would then imply that the inner maximum over $\hat Q_X$ is bounded for every $P_{U|X}$. Since 
\begin{align*}
&\mathbb{E}_{P_U}\left[\max_{\hat Q_X}D(P_{Y|U}\|\hat Q_{Y})-D(P_{Z|U}\|\hat Q_Z)\right] \\
&=-h_P(Y|U)+h_P(Z|U)+
\mathbb{E}_{P_{U}}\max_{\hat Q_X}\mathbb{E}_{P_{YZ|U}}\log\frac{\hat Q_Z}{\hat Q_Y},
\end{align*}
we can observe that for every $P_{UX}$ we have
\begin{align*}
     &\mathbb{E}_{P_{U}}\max_{\hat Q_X}\mathbb{E}_{P_{YZ|U}}\log\frac{\hat Q_Z}{\hat Q_Y}\leq \mathbb{E}_{P_{U}}\max_{\hat Q_X}\mathbb{E}_{P_{YZ|U}}\log\frac{\hat Q_Z}{\hat Q_{YZ}} \\
     &=\mathbb{E}_{P_{U}}\max_{\hat Q_X}\mathbb{E}_{P_{YZ|U}}\log\frac{1}{\hat Q_{Y|Z}}=\mathbb{E}_{P_{U}}\mathbb{E}_{P_{YZ|U}}\log\frac{1}{ Q_{Y|Z}}<\infty.
 \end{align*}

 For the setting $\sigma_{Q,Z|X}\leq \sigma_{Q,Y|X}$, we would like to prove that there is a maximizer $P_{U|X}$ in which $U$ and $X$ are jointly Gaussian random variables.

 Let 
 \begin{align*}
     \sigma^2_X&=\mathsf{Var}_P[X],
    \\ \sigma^2_{X|Y}&=\mathbb{E}[\mathsf{Var}_P[X|Y]],
    \\ \sigma^2_{X|Z}&=\mathbb{E}[\mathsf{Var}_P[X|Z]]. 
 \end{align*}
Then, using the fact that $(X,Y,Z)$ is jointly Gaussian under $P$, we can write the constraints as
\begin{align*}
    h(X|YU)&\geq h(X|Y)-R= \frac{1}{2} \log(2\pi e\sigma_{X|Y}^2)-R,
\\
    h(X|ZU)&\geq h(X|Z)-R= \frac{1}{2} \log(2\pi e\sigma_{X|Z}^2)-R.
\end{align*}
Consider the following  optimization problem where we relax the Gaussian assumption on the marginal distribution of $X$ under $P$:
\begin{align*}
\sup_{P_{UX}} \mathbb{E}_{P_U}\left[\max_{\hat Q_X}D(P_{Y|U}\|\hat Q_{Y})-D(P_{Z|U}\|\hat Q_Z)\right]
 \end{align*} 
 where the supremum is over $P_{UX}$ satisfying 
\begin{align}
    h(X|YU)&\geq  \frac{1}{2} \log(2\pi e\sigma_{X|Y}^2)-R,\label{eqnConstrx1}
\\
    h(X|ZU)&\geq  \frac{1}{2} \log(2\pi e\sigma_{X|Z}^2)-R,\label{eqnConstrx2}
\end{align}
 and satisfying the following covariance constraint:
 \begin{align}
     \mathsf{Var}_P[X]\leq \sigma_X^2.\label{eqnConstrx3}
 \end{align}
In the above relaxed problem, the channels $P_{YZ|X}$ and $Q_{YZ|X}$ are kept as the same Gaussian channels.
The relaxed problem is similar to channel coding type optimizations, as the distribution on $X$ is arbitrary. The advantage of the relaxation is that we can use Caratheodery's theorem to restrict the alphabet size of random variable $U$ to $|\mathcal U|\leq 3$. This restriction is crucial when we employ the approach in \cite{gen14} to prove the optimality of the Gaussian channel $P_{U|X}$. Herein, we also utilize the idea of "perturbated expressions" introduced in \cite{gon21}.

Let $G$ be a standard Gaussian random variable, independent of $(X,Y,Z)$, and (under either hypothesis) let  
\begin{align*}
\tilde X&=X+\delta\cdot G  \\
\tilde Y&=Y+\delta\cdot G  \\
\tilde Z&=Z+\delta\cdot G 
\end{align*}
where $\delta>0$ is a small real number. Observe that the following Markov chain holds for any $P_{UX}$:
$$U\rightarrow X\rightarrow \tilde X\rightarrow (\tilde Y,\tilde Z).$$
Consider the following perturbed optimization problem:
 \begin{align}
&\sup_{P_{UX}}
\mathbb{E}_{P_U}\left[\max_{\hat Q_X}D(P_{\tilde Y|U}\|\hat Q_{Y})-D(P_{\tilde Z|U}\|\hat Q_Z)\right] \nonumber
\\&-\delta\cdot h_P(\tilde Y|U)
 \end{align} 
 where the supremum is over $P_{UX}$ satisfying   
 \begin{align}
    h_P(\tilde X|\tilde YU)&\geq  \frac{1}{2} \log(2\pi e\sigma_{X|Y}^2)-R,\label{eqnConstrx1n}
\\
    h_P(\tilde X|\tilde ZU)&\geq  \frac{1}{2} \log(2\pi e\sigma_{X|Z}^2)-R.\label{eqnConstrx2n}
\end{align}
 and \eqref{eqnConstrx3}.  Letting $\delta$ go to zero yields the original optimization problem. Therefore, it suffices to show the existence of a maximizer in which $U$ and $X$ are jointly Gaussian for the perturbed problem for any $\delta>0$. 
 
Let $p^*(u,x, \tilde x, \tilde y, \tilde z)$ be a maximizer of the optimization problem.\footnote{Existence of a maximizer follows from the standard arguments in Appendix II of \cite{gen14}. The Gaussian noise $G$ is considered to make sure that $\tilde{X}$ has a bounded density in order to get the proper convergence results. Alternatively, one can also utilize the result in
\cite{mahvari2023stability} which does not necessitate showing the existence of a maximizer. }
Let $$(U^*_1,X^*_{1},{\tilde X}^*_{1}, \tilde Y^*_1, \tilde Z^*_1)$$ and $$(U^*_2,X^*_{2},{\tilde X}^*_{2}, \tilde Y^*_2, \tilde Z^*_2)$$ to be two independent copies of a maximizer $p^*(u,x, \tilde x, \tilde y, \tilde z)$ of the optimization problem
subject to the constraints in the problem. Further, let $V$ denote the optimal value.
Let
\[
\begin{aligned}
&(U', X', \tilde X',\tilde Y',\tilde Z') := (( U^*_1, U^*_2), ( X^*_1, X^*_2), ( \tilde X^*_1, \tilde X^*_2),( \tilde Y^*_1, \tilde Y^*_2), (\tilde Z^*_1, \tilde Z^*_2)) \\
\end{aligned}
\]
and
\[
\begin{aligned}
\begin{pmatrix} X_{+} \\ X_{-} \end{pmatrix}
:= \begin{pmatrix} \sqrt{t} & \sqrt{1-t} \\ -\sqrt{1-t} & \sqrt{t} \end{pmatrix} \begin{pmatrix} X^*_{1} \\ X^*_{2} \end{pmatrix} \end{aligned}
\]
where $t = \frac{1}{2}$. We define channel outputs $\tilde X_{+}, \tilde X_{-}, \allowbreak \tilde Y_{+}, \tilde Y_{-}, \tilde Z_{+}, \tilde Z_{-}$ in the standard manner as in \cite{gen14}.

Set
\begin{align}
\hat U_- &:=(U',\tilde Y_{+}) &\quad   \hat U_+ &:=(U', \tilde Z_{-}),  
\end{align}

Since rotations preserve mutual information, note that standard manipulations yield
\begin{align}
&2V \nonumber\\
&=\mathbb{E}_{P_{U'}}\Big[\max_{\hat Q_{X_+X_-}}D(P_{\tilde Y_{+}\tilde Y_{-}|U'}\|\hat Q_{Y_+Y_-})
-D(P_{\tilde Z_{+}\tilde Z_{-}|U'}\|\hat Q_{Z_+Z_-})\Big] \nonumber 
-\delta h_P(\tilde Y_{+}\tilde Y_{-}|U')\nonumber
\\&=\mathbb{E}_{P_{U'}}\bigg[\max_{\hat Q_{X_+X_-}}D(P_{\tilde Y_{+}\tilde Y_{-}|U'}\|\hat Q_{\tilde Y_{+}\tilde Y_{-}})-D(P_{\tilde Y_{+}\tilde Z_{-}|U'}\|\hat Q_{\tilde Y_{+}\tilde Z_{-}})\nonumber
\\&\quad+D(P_{\tilde Y_{+}\tilde Z_{-}|U'}\|\hat Q_{\tilde Y_{+}\tilde Z_{-}})-D(P_{\tilde Z_{+}\tilde Z_{-}|U'}\|\hat Q_{\tilde Z_{+}\tilde Z_{-}})\bigg]\nonumber
\\&\quad 
-\delta h_P(\tilde Y_{+}|U')-\delta h_P(\tilde Y_{-}|\tilde Y_{+}U')
\nonumber
\\&\leq \mathbb{E}_{P_{U'}}\bigg[\max_{\hat Q_{X_+X_-}}D(P_{\tilde Y_{-}|U'\tilde Y_{+}}\|\hat Q_{\tilde Y_{-}|\tilde Y_+}|P_{\tilde Y_{+}|U'})
 -D(P_{\tilde Z_{-}|U'\tilde Y_{+}}\|\hat Q_{\tilde Z_{-}|\tilde Y_+}|P_{\tilde Y_{+}|U'})\bigg]\nonumber
\\&\quad+\mathbb{E}_{P_{U'}}\bigg[\max_{\hat Q_{X_+X_-}} D(P_{\tilde Y_{+}|U'\tilde Z_{-}}\|\hat Q_{\tilde Y_{+}|\tilde Z_{-}}|P_{\tilde Z_{-}|U'})
-D(P_{\tilde Z_{+}|\tilde Z_{-}U'}\|\hat Q_{\tilde Z_{+}|\tilde Z_{-}}|P_{\tilde Z_{-}|U'})\bigg]\nonumber
\\&\quad -\delta h_P(\tilde Y_{+}|U'\tilde{Z}_-)-\delta h_P(\tilde Y_{-}|\tilde Y_{+}U')-\delta I_P(\tilde{Y}_+;\tilde{Z}_-|U')\nonumber
\\&\leq \mathbb{E}_{P_{U'\tilde Y_{+}}}\bigg[\max_{\hat Q_{X_-}}D(P_{\tilde Y_{-}|U'\tilde Y_{+}}\|\hat Q_{\tilde Y_{-}})-D(P_{\tilde Z_{-}|U'\tilde Y_{+}}\|\hat Q_{\tilde Z_{-}})\bigg]\nonumber
\\&\quad+\mathbb{E}_{P_{U'\tilde Z_{-}}}\bigg[\max_{\hat Q_{X_+}}D(P_{\tilde Y_{+}|U'\tilde Z_{-}}\|\hat Q_{\tilde Y_{+}})-D(P_{\tilde Z_{+}|\tilde Z_{-}U'}\|\hat Q_{\tilde Z_{+}})\bigg]\nonumber
\\&\quad -\delta h_P(\tilde Y_{+}|U'\tilde{Z}_-)-\delta h_P(\tilde Y_{-}|\tilde Y_{+}U')-\delta I_P(\tilde{Y}_+;\tilde{Z}_-|U')\nonumber
\\&=\mathbb{E}_{P_{\hat U'_-}}\bigg[\max_{\hat Q_{X_-}}D(P_{\tilde Y_{-}|\hat U'_-}\|\hat Q_{\tilde Y_{-}})-D(P_{\tilde Z_{-}|\hat U'_-}\|\hat Q_{\tilde Z_{-}})\bigg]\nonumber
\\&\quad+\mathbb{E}_{P_{\hat U'_+}}\bigg[\max_{\hat Q_{X_+}}D(P_{\tilde Y_{+}|\hat U'_+}\|\hat Q_{\tilde Y_{+}})-D(P_{\tilde Z_{+}|\hat U'_+}\|\hat Q_{\tilde Z_{+}})\bigg]\nonumber
\\&\quad 
-\delta h_P(\tilde Y_{+}|U'_+)-\delta h_P(\tilde Y_{-}|U'_-)-\delta I_P(\tilde{Y}_+;\tilde{Z}_-|U').
\label{leqndiff}
\end{align}
Moreover,
 \begin{align*}
    &h_P(\tilde X_+\tilde X_-|\tilde Y_+\tilde Y_- U')
    \\&=
    h_P(\tilde X_-|\tilde Y_- U'\tilde Y_+)+h_P(\tilde X_+|\tilde X_-\tilde Y_+\tilde Y_- U')
    \\
    &=
    h_P(\tilde X_-|\tilde Y_- U'\tilde Y_+)+h_P(\tilde X_+|\tilde X_-\tilde Z_-\tilde Y_+\tilde Y_- U')
   \\ &\leq 
    h_P(\tilde X_-|\tilde Y_- U'\tilde Y_+)+h_P(\tilde X_+|\tilde Y_+U'\tilde Z_-)
    \\ &=
    h_P(\tilde X_-|\tilde Y_- \hat{U}_-)+h_P(\tilde X_+|\tilde Y_+\hat{U}_+).
\end{align*}
Similarly,
 \begin{align*}
    h_P(\tilde X_+\tilde X_-|\tilde Z_+\tilde Z_- U')&\leq 
    h_P(\tilde X_-|\tilde Z_-\hat U_-)+h_P(\tilde X_+|\tilde Z_+\hat U_+).
\end{align*}
Now, assume that $S$ takes values in $\{+,-\}$ with uniform probability, and is independent of all previously defined variables. 
When $S=+$, we set
\begin{align*}
    &U''=(S, \hat{U}_+),~~~ X''=X_+,~~~ \tilde{X}''=\tilde X_+,
    \\&\tilde{Y}''=\tilde Y_+,~~~ \tilde{Z}''=\tilde Z_+.
\end{align*}
When $S=-$, we set
\begin{align*}
    &U''=(S, \hat{U}_-),~~~ X''=X_-,~~~ \tilde{X}''=\tilde X_-,
    \\&\tilde{Y}''=\tilde Y_-,~~~ \tilde{Z}''=\tilde Z_-.
\end{align*}
Notice that
 the distribution on $U''X''$ is a candidate maximizer of the expression and hence must induce a value of at most $V$. Thus, from \eqref{leqndiff} we obtain that 
$$I_P(\tilde Y_{+};\tilde Z_{-}|U')=0.$$
The rest of the arguments mimic ones in \cite{gen14} in Propositions 2, 8, and Corollary 3 of \cite{gen14}, and we omit the details.
 Using Proposition 2 in \cite{gen14}, we deduce that $I_P(\tilde X_{-};\tilde X_{+}|U')=0$. By the Skitovic-Darmois characterization of Gaussians, we deduce that
 conditioned on $U'$, random variables $X^*_{1}, X^*_{2}$ are also jointly Gaussians, and they have the same covariance matrix (independent of $U'$). 
Thus, a Gaussian $p(x|u)$ with the same covariance for each $u$ is an optimal solution of the optimization
problem.

\section*{Acknowledgements}
This work was supported by the CUHK Direct Grant 4055193. The authors thank Chandra Nair for inspiring discussions and for helping with the proof of the optimality of Gaussian distributions.

\bibliographystyle{ieeetr}
\bibliography{references}

\end{document}